\newtheorem{theorem}{Theorem}[section]
\newtheorem{fact}[theorem]{Fact}
\newtheorem{corollary}[theorem]{Corollary}
\newtheorem{lemma}[theorem]{Lemma}
\newtheorem{definition}[theorem]{Definition}
\newtheorem{remark}[theorem]{Remark}
\newcommand{\ent}{\mathcal{H}}
\newcommand{\mutualent}{\mathcal{I}}
\newcommand{\Yesinstance}{D_{*}}
\newcommand{\Noinstance}{D_{0}}
\newcommand{\ext}{\mathrm{ext}}
\newcommand{\calP}{\mathcal{P}}
\newcommand{\calQ}{\mathcal{Q}}
\newcommand{\calR}{\mathcal{R}}
\newcommand{\calA}{\mathcal{A}}
\newcommand{\calB}{\mathcal{B}}
\newcommand{\calW}{\mathcal{W}}
\newcommand{\Jnote}[1]{[{\color{black}Jiapeng:}{ \color{blue}#1}] }
\newcommand{\kdisj}{k\text{-}\mathrm{UDISJ}}
\newcommand{\sparsedisj}{\mathrm{S}\text{-}\mathrm{UDISJ}}
\title{Lifting Theorems Meet Information Complexity: Known and New Lower Bounds of Set-disjointness\footnote{Working Paper}}
\begin{document}

\author{
Guangxu Yang\thanks{Research supported by NSF CAREER award 2141536.}
\\
Department of Computer Science\\
University of Southern California\\
\texttt{guangxuy@usc.edu}
\and
Jiapeng Zhang\thanks{Research supported by NSF CAREER award 2141536.}
\\
Department of Computer Science\\
University of Southern California\\
\texttt{jiapengz@usc.edu}
}

\maketitle

\begin{abstract}
Set-disjointness problems are one of the most fundamental problems in communication complexity and have been extensively studied in past decades. Given its importance, many lower bound techniques were introduced to prove communication lower bounds of set-disjointness.

Combining ideas from information complexity and query-to-communication lifting theorems, we introduce a density increment argument to prove communication lower bounds for set-disjointness:
\begin{itemize}
\item We give a simple proof showing that a large rectangle cannot be $0$-monochromatic for multi-party unique-disjointness.
\item We interpret the direct-sum argument as a density increment process and give an alternative proof of randomized communication lower bounds for multi-party unique-disjointness.
\item Avoiding full simulations in lifting theorems, we simplify and improve communication lower bounds for sparse unique-disjointness.
\end{itemize}
Potential applications to be unified and improved by our density increment argument are also discussed.
\end{abstract}

\section{Introduction}
\textit{Set-disjointness} is one of the most important problems in communication complexity. Since the formulation of the communication model \cite{yao1979some}, many researchers have made great efforts to understand the communication complexity, both upper and lower bounds, of set-disjointness problems in various communication models \cite{BFS86, KS92,razborov1992on,razborov2003quantum,bar2004information,jain2003lower,haastad2007randomized, KW09, braverman2013tight, ST13, SBPcc, OW15, Gav2016,braverman2017rounds,braverman2018near,kamath2021simple, DOR21}. 
Building upon communication lower bounds for set-disjointness, applications in diverse areas have been studied. For example, it gives lower bounds for monotone circuit depth \cite{goos2018communication}, streaming problems \cite{alon1999space,bar2004information,kamath2021simple}, proof complexity \cite{goos2018communication}, game theory \cite{stablemarriage,ganor2017communication}, property testing \cite{BBM11}, data structure lower bound \cite{MNSW98}, extension complexity \cite{braverman2013information,SBPcc}, and more.

Given the importance of this problem, many techniques were invented simply to understand communication lower bounds of set-disjointness. 
Some remarkable methods include rank method \cite{grigoriev1985lower,haastad2007randomized,rao2020communication}, discrepancy method \cite{rao2015simplified}, corruption bound \cite{razborov1992on}, smooth rectangle bound \cite{jain2010partition,HJ13}, and information complexity \cite{CSWY01,bar2004information, Gro, Jay}. 
Among all of these methods, the information complexity framework seems to provide the best results so far. We refer interested readers to \cite{chattopadhyay2010story} for a good survey on these results.

In this paper, we continue the study of set-disjointness. 
Inspired by \textit{simulation methods} in \textit{query-to-communication lifting theorems} \cite{raz1997separation,goos2018deterministic, lovett2022lifting, Yang2022SimulationMI}, we present a proof of lower bounds of set-disjointness based on \textit{density increment arguments} (sometimes also called structure-vs-pseudorandomness approach). 
Based on this method, we give several new lower bounds for set-disjointness in different communication models. 
Our proof can be considered as a combination of simulation methods and information complexity. 

Compared with previous techniques, our proof is simpler and more general. It addresses some drawbacks of both simulation methods and information complexity methods. More details will be discussed in Section \ref{sec: techniques}.

\subsection{Our results}

The main contribution of this work is ''\textit{explicit proofs}'' of communication lower bounds, together with some new \textit{unique-disjointness} lower bounds.
We call it explicit because our proof framework has several advantages compared to existing techniques:
\begin{itemize}
    \item It has fewer restrictions to communication models.
    \item It allows us to use communication lower bound techniques in a non-black-box way.
    \item It provides a method to analyze distributions with correlations between different coordinates.
\end{itemize}
In Section \ref{sec: potential}, we discuss three potential applications of these advantages. Each of them corresponds to an advantage here.

Our proof builds on a combination of simulation techniques from lifting theorems and information complexity. Specifically, we abstract the core idea from \textit{Raz-McKenzie simulation} \cite{raz1997separation} and revise it as a density increment argument. To explain more connections and comparison with previous techniques, we present three lower bounds for unique-disjointness problem.

We first study the \textit{multi-party communication model} ($\kdisj$). 
In this setting, there are $k$ parties, where each party $j$ holds a set $x_j\in\{0,1\}^{n}$ (we use a binary string to represent a set). 
It is promised that either all sets are pairwise disjoint, or they share a unique common element. Formally, we define 
\begin{itemize}
\item $D_{0}:=\{(x_1,\dots,x_k)\in(\{0,1\}^{n})^{k}: \forall i, x_{1}(i) + \dots + x_{k}(i)\leq 1 \}$.
\item $D_{*}:=\{(x_1,\dots,x_k)\in(\{0,1\}^{n})^{k}: \exists \ell, x_{1}(\ell) = \cdots =x_{k}(\ell)=1\text{ and } \forall i\neq \ell, x_{1}(i) + \dots + x_{k}(i)\leq 1 \}$.
\end{itemize}
We use $D_0$ to refer to the \textsf{no} instances and $D_*$ to refer to the \textsf{yes} instances. 
In this setting, we prove a structure lemma that any $0$-large rectangle must intersect $D_{*}$.

\begin{theorem}
\label{thm: kdisk_dcc}
Let $R\subseteq(\{0,1\}^{n})^{k}$ be a rectangle such that $|R\cap \Noinstance|\geq 2^{-n/k}\cdot |\Noinstance|$, then $R\cap\Yesinstance\neq\emptyset$.
\end{theorem}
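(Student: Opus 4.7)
The plan is to prove the contrapositive by induction on $n$: assuming $R \cap \Yesinstance^{(n)} = \emptyset$, show $|R \cap \Noinstance^{(n)}| \leq 2^{-n/k}\,|\Noinstance^{(n)}|$. The starting point is the natural bijection between $\Yesinstance^{(n)}$ and pairs $(w,\ell) \in \Noinstance^{(n)} \times [n]$ with $\ell$ unused in $w$ (i.e., $w_j(\ell)=0$ for all $j$), given by $(w,\ell) \mapsto (w_1 \cup \{\ell\},\ldots,w_k \cup \{\ell\})$. Under this correspondence, the assumption $R \cap \Yesinstance = \emptyset$ becomes: for every $w \in \Noinstance$ and every $\ell$ unused in $w$, some party $j$ has $w_j \cup \{\ell\} \notin A_j$.

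For the inductive step, fix the last coordinate and split each $A_j$ into $A_j^{(b)} = \{y \in \{0,1\}^{n-1} : (y,b) \in A_j\}$ for $b \in \{0,1\}$. Decomposing over the value $\sigma \in \Noinstance^{(1)}$ of the $n$-th coordinate gives $|R \cap \Noinstance^{(n)}| = \sum_{\sigma} |R_\sigma \cap \Noinstance^{(n-1)}|$, where $R_\sigma := A_1^{(\sigma_1)} \times \cdots \times A_k^{(\sigma_k)}$. A short check shows each $R_\sigma$ inherits the hypothesis $R_\sigma \cap \Yesinstance^{(n-1)} = \emptyset$: any witness in a slice would lift, via appending $\sigma$, to an $\Yesinstance^{(n)}$ witness in $R$.

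The main obstacle is closing the induction quantitatively. A straight application of the inductive hypothesis to each of the $k+1$ slices and summing gives only $|R \cap \Noinstance^{(n)}| \leq (k+1) \cdot 2^{-(n-1)/k}\,|\Noinstance^{(n-1)}| = 2^{1/k} \cdot 2^{-n/k}\,|\Noinstance^{(n)}|$, missing the desired bound by a factor of $2^{1/k}$. Recovering this factor is the crux. The extra leverage comes from a stronger per-coordinate consequence of the hypothesis: for every $\ell \in [n]$, the sub-rectangle $A_1^{(1)\ell} \times \cdots \times A_k^{(1)\ell}$ on $[n]\setminus\{\ell\}$ must contain \emph{no pairwise disjoint tuple at all} (strictly stronger than having no $\Yesinstance^{(n-1)}$ tuple), since such a tuple would yield an $\Yesinstance^{(n)}$ element of $R$ with common coordinate $\ell$.

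I would close the gap either by (a) choosing the decomposing coordinate $\ell$ adaptively to maximize the imbalance among the $k+1$ slices, or (b) running a direct entropy argument in parallel: let $W$ be uniform on $R \cap \Noinstance$, viewed as a random coloring $c : [n] \to \{0,\ldots,k\}$; the chain rule $\ent(W) = \sum_i \ent(c_i \mid c_{<i})$ turns the density hypothesis into an average per-coordinate entropy gap of at most $1/k$, and the task becomes showing that the rectangle structure together with the ``no $\Yesinstance$'' assumption forces an aggregate entropy deficit of at least $n/k$. Converting the global ``no $\Yesinstance$'' condition into a quantitative per-coordinate loss of $(\log 2)/k$ — the step that bridges the cross-intersecting structure of the sub-rectangles $A_1^{(1)\ell} \times \cdots \times A_k^{(1)\ell}$ and the additive entropy accounting — is the main technical difficulty I anticipate.
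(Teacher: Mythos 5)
You've correctly set up the slicing decomposition, the inductive structure, and the pivotal observation that the ``all--$1$'' sub-rectangle $A_1^{(1)}\times\cdots\times A_k^{(1)}$ cannot contain a pairwise disjoint tuple. But the proposal stops exactly where the paper's proof does its real work, and the gap you flag as ``the main technical difficulty I anticipate'' is a genuine hole: neither the adaptive-coordinate idea nor the entropy sketch is carried out, and the symmetric slicing on its own cannot recover the missing $2^{1/k}$ factor, since you would have to invoke the inductive hypothesis once for each of the $k+1$ slices $R_\sigma$ and sum.

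The paper closes the gap with an \emph{asymmetric projection} rather than a symmetric slicing, plus a purely counting argument (no entropy needed). For a chosen party $j$ it replaces party $j$'s block by $A_j^{(0)}\cup A_j^{(1)}$ and every other party $j'$'s block by $A_{j'}^{(0)}$ alone; in your notation this rectangle $\Pi_{i,j}(R)$ is $R_{0^k}\cup R_{e_j}$, the union of the ``all--zero'' slice with the single slice $\sigma=e_j$. Crucially it is again a rectangle, so the bound is applied once, to it. The needed factor of $k$ then comes from an averaging argument. Let $L\subseteq\Noinstance^{(n-1)}$ be the set of $x'$ that extend into $R\cap\Noinstance^{(n)}$; your no-pairwise-disjoint-tuple observation says every $x'\in L$ has at most $k$ of the $k+1$ possible extensions. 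Partition $L$ into $A$, the $x'$ with at least two extensions, and $B$, the $x'$ with exactly one. If $x'\in A$ then the rectangle structure forces $(x_j',0)\in X_j$ for every party $j$ once any two of the $k+1$ extensions coexist, so $x'$ lies in $\Pi_{i,j}(R)$ for every $j$; if $x'\in B$ it lies in $\Pi_{i,j}(R)$ for at least one $j$. Averaging over $j\in[k]$ yields a $j$ with $|B\cap\Pi_{i,j}(R)|\geq|B|/k$, whence $k\,|\Pi_{i,j}(R)\cap\Noinstance^{(n-1)}|\geq k|A|+|B|\geq|R\cap\Noinstance^{(n)}|$, which is exactly the $\log(1+1/k)\geq 1/k$ bits of density gain per coordinate. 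That merge-two-slices move together with the $A/B$ partition and the average over $j$ is the missing step; without it your induction does not close.
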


We note that Theorem \ref{thm: kdisk_dcc} implies (and stronger than) an $\Omega(n/k)$ deterministic communication lower bound of $\kdisj$. 
For any protocol with $o(n/k)$ communication bits, we can always find a rectangle $R$ in the partition such that $|R\cap \Noinstance|\geq 2^{-n/k}\cdot |\Noinstance|$, Theorem \ref{thm: kdisk_dcc} then tells us that $R$ is not disjoint from $\Yesinstance$.

Our proof is a two-page elementary (and self-contained) proof. Furthermore, we do not even need notions like entropy or rank. 
This proof also reveals the main idea of query-to-communication lifting theorems.  We will discuss more details in Section \ref{sec: techniques}.

Our second contribution is a new proof of randomized communication lower bounds of $\kdisj$. This problem has been extensively studied for many years.
Building on a series of great papers \cite{alon1999space,bar2004information, CKX03}, the optimal tight randomized communication lower bound $\Omega(n/k)$ was finally obtained by \cite{Gro, Jay} through the information complexity framework. In this paper, we reprove this theorem via the density increment argument.

\begin{theorem}
\label{thm: kdisk_rcc}
For any $k\geq 2$, the randomized communication complexity of $\kdisj$ is $\Omega(n/k)$.
\end{theorem}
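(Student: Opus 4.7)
The plan is to reprove the $\Omega(n/k)$ bound by recasting the standard information-complexity direct-sum argument as a density increment process, with Theorem \ref{thm: kdisk_dcc} supplying the structural base case.

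I would start by fixing a product hard distribution $\mu = \nu^n$, where $\nu$ is a distribution over single-coordinate patterns in $\{0,1\}^k$ supported on the ``no''-patterns (tuples with at most one $1$) and close to uniform on them. The planted ``yes'' distribution $\mu^*$ is obtained from $\mu$ by replacing one coordinate $\ell \in [n]$, chosen uniformly at random, with the all-ones pattern. By Yao's principle, it suffices to lower bound the distributional communication complexity of any deterministic protocol $\Pi$ with constant error on the mixture $\tfrac{1}{2}\mu + \tfrac{1}{2}\mu^*$.

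The heart of the plan is a density-increment lemma for transcripts of $\Pi$. Since conditioning on a transcript $\tau$ yields a combinatorial rectangle $R_\tau$, the conditional marginals on each coordinate $i$ are tractable. I would define a per-coordinate density $\rho_i(\tau)$ as the deviation of the conditional marginal on coordinate $i$ from $\nu$ (measured in $\KL$, Hellinger, or a direct multiplicative density). A chain-rule argument for product distributions across coordinates should give
\[
\mathbb{E}_\tau \sum_{i \in [n]} \rho_i(\tau) \;\leq\; O(c),
\]
showing that the total density increment is budgeted by the communication cost. The complementary statement, essentially the single-coordinate distributional bound for $k$-AND, is that $\Pi$ cannot detect a planted all-ones pattern at coordinate $\ell$ with constant advantage unless $\mathbb{E}_\tau[\rho_\ell(\tau)] = \Omega(1/k)$. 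I would prove this by viewing a single coordinate as a width-1 instance and invoking the Theorem \ref{thm: kdisk_dcc}-style structural fact that any ``large'' marginal rectangle in $\{0,1\}^k$ must assign non-trivial mass to the all-ones pattern. Averaging over $\ell$, constant advantage forces $c = \Omega(n/k)$.

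The main obstacle will be the single-coordinate step, where the density-to-advantage conversion must contend with the fact that $\nu$ assigns zero mass to the all-ones pattern, so divergences blow up on any transcript that detects it. I expect to handle this by smoothing $\nu$ with an $O(1/k)$-fraction of all-ones mass, analogous to the standard Hellinger-distance analysis of Bar-Yossef et al.; the density-increment viewpoint should make the $\Omega(1/k)$ factor transparent as the cost of pinning down which of the $k$ parties serves as the ``silent'' one within a single-coordinate ``no'' pattern.
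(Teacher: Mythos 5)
Your overall architecture matches the paper's: recast the direct-sum argument as a density increment, with a ``budget'' step (total density increase bounded by communication) and a ``per-coordinate hardness'' step (each informative coordinate must contribute $\Omega(1/k)$). The paper also fixes a hard distribution, passes to deterministic protocols via Yao, defines a density function on leaf rectangles, and iteratively projects coordinates away. At this level of resolution the two proofs agree.

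However, there is a genuine gap in your per-coordinate step, and you are also missing a structural ingredient that the paper relies on essentially.

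\textbf{Missing ingredient: the auxiliary variable $\calW$.} The paper does not use a bare product $\nu^n$. It samples, per coordinate $i$, an auxiliary variable $\calW_i \sim [k]$ specifying which party may hold element $i$, and then bit $\calA_i$ determines whether that party actually holds it. The density function is defined as $-\mutualent(\calQ_I,\calW_I : \calR \mid \calW_{I^c})$, and the projection is literally ``restrict on $\calW_i$.'' This is what makes precise your intuition about ``pinning down which of the $k$ parties serves as the silent one'': conditioning on $\calW_i = j$ isolates party $j$'s bit, so that the per-coordinate entropy gap is a statement about a single Boolean entropy and the $\Omega(1/k)$ loss is exactly the cost of that conditioning. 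Without $\calW$ (or an equivalent coupling) you have no handle with which to decouple a single party's contribution from the rectangle constraint, and the ``chain-rule argument for product distributions'' you gesture at will not produce the quantity you need for the single-coordinate step.

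\textbf{The per-coordinate step is not reducible to Theorem \ref{thm: kdisk_dcc}.} You propose to invoke a ``Theorem \ref{thm: kdisk_dcc}-style structural fact'' for the width-$1$ instance. But Theorem \ref{thm: kdisk_dcc} is a corruption bound for $0$-monochromatic rectangles; in the randomized setting the leaf rectangles of a constant-error protocol are merely biased, not monochromatic, so the deterministic corruption fact does not apply and a robust quantitative version is precisely what needs proving. That robust version is the actual technical core: in the paper it is handled by a combination of a bias measure $\gamma_i$ (the probability, over the residual conditional distribution, that a leaf rectangle still touches an all-ones pattern at coordinate $i$), a bias lemma showing $\gamma$ is preserved under projection (via a convexity inequality), and an analytic lemma from Gronemeier/Jayram converting small bias into an $\Omega(1/k)$ conditional-entropy gap. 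Your fallback — smoothing $\nu$ with $O(1/k)$ all-ones mass and running a Hellinger argument à la Bar-Yossef et al. — is a genuinely different route (the older information-complexity argument, not a density-increment one). It could in principle succeed, but as proposed it does not mesh with the rest of your plan: the smoothed $\nu$ is no longer supported on no-instances, so your budget inequality, your choice of what $\rho_i(\tau)$ deviates from, and your definition of a ``planted'' instance would all have to be reworked. As stated, the proposal names the right obstacle but does not have a workable mechanism to close it.
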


We first note that Theorem \ref{thm: kdisk_rcc} does not imply Theorem \ref{thm: kdisk_dcc} because Theorem \ref{thm: kdisk_dcc} indicates that every large rectangle (contains many \textsf{no} instance) cannot be monochromatic. However, Theorem \ref{thm: kdisk_rcc} only proves randomized communication lower bounds. 

Our proof of Theorem \ref{thm: kdisk_rcc} is a mix of information complexity and query-to-communication simulations. 
Roughly speaking, in the information complexity framework, we analyze the information cost for each coordinate and then apply a direct-sum argument to merge them. 
In our density increment argument, we merge these costs by borrowing the projection operation from query-to-communication simulations. 
Hence, our density increment argument can be interpreted as an alternative direct-sum argument. 

Several papers \cite{chattopadhyay2019query, IC0isIC,manor2022lifting} pointed research directions in connecting information complexity and lifting theorems, and our proof has a great potential to unify information complexity and lifting theorems in this direction. 

Our last result is a tight deterministic lower bound for (two-party) \textit{sparse unique-disjointess} ($\sparsedisj$) for a large range of sparse parameters. 
This problem, with sparsity parameter $s$, can be described as follows: Alice holds a set $A$ and Bob holds a set $B$ with $|A|,|B|\le s$. It is promised that either $A\cap B= \emptyset$ or $|A\cap B|=1$, where Alice and Bob need to distinguish the two cases with deterministic communication.

Two extreme choices of $s$ correspond to two important problems in communication complexity. 
If $s=n$, this problem becomes the standard unique-disjointness problem (i.e., $\kdisj$ with $k=2$). 
When $s=1$, the problem is essentially the \textit{EQUALITY problem}. 
For $\sparsedisj$, we prove the following theorem.

\begin{theorem}
\label{thm: sdisk_dcc}
Let $\epsilon>0$ be any small constant. For any $s\leq n^{1/2-\epsilon}$, the deterministic communication complexity of $s$-sparse unique-disjointness is $\Omega(s\cdot\log(n/s))$.
\end{theorem}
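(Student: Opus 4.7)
The plan is to prove Theorem \ref{thm: sdisk_dcc} by reducing to a cleaner ``at-most-one-equality'' sub-problem and then running a density-increment argument analogous to our proof of Theorem \ref{thm: kdisk_dcc}. First, I partition the universe $[n]$ into $s$ disjoint blocks of size $N := n/s$ each, and consider inputs $(A,B)$ in which each party picks exactly one element from each block. This gives a sub-problem $G_{s,N}$ whose input is a pair $(a,b) \in [N]^s \times [N]^s$, whose no-instances are pairs with $a_j \neq b_j$ for every $j \in [s]$, and whose yes-instances are pairs where exactly one coordinate satisfies $a_j = b_j$. Since any protocol for $s$-sparse unique-disjointness also solves $G_{s,N}$ as a special case, it suffices to prove the deterministic communication complexity of $G_{s,N}$ is $\Omega(s \log N) = \Omega(s \log(n/s))$.

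The core of the proof is the following structural lemma, modeled on Theorem \ref{thm: kdisk_dcc}. Let $\mu$ be uniform over the no-instances of $G_{s,N}$. Then any rectangle $R = \calA \times \calB \subseteq [N]^s \times [N]^s$ satisfying $\mu(R) \geq 2^{-\gamma s \log N}$, for a sufficiently small absolute constant $\gamma > 0$, must contain a yes-instance. Given this lemma, Theorem \ref{thm: sdisk_dcc} follows from a standard pigeonhole over the leaves of any $c$-bit deterministic protocol: some leaf-rectangle carries $\mu$-mass at least $2^{-c}$ and, by correctness on the promise, must be $0$-labeled, forcing $c = \Omega(s \log N)$. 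The hypothesis $s \leq n^{1/2-\epsilon}$ will enter here to absorb lower-order slack from the reduction to $G_{s,N}$ and to ensure the $\mu$-mass accounting survives rounding.

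I plan to prove the structural lemma by iterating a density-increment step across the $s$ coordinates. At step $j$, I examine the marginal distributions $\pi_j^A(v) := \Pr_{a \sim \calA}[a_j = v]$ and $\pi_j^B(v) := \Pr_{b \sim \calB}[b_j = v]$ on the current rectangle, weighted by $\mu$-mass. Two cases arise: either some value $v^* \in [N]$ is \emph{collision-friendly} in the sense that both $\pi_j^A(v^*)$ and $\pi_j^B(v^*)$ are non-negligible, in which case we aim to exhibit a yes-instance whose unique collision occurs at coordinate $j$ on value $v^*$; or the two marginals at coordinate $j$ are essentially supported on disjoint subsets of $[N]$, which yields an $\Omega(\log N)$ density-increment when we restrict to a carefully chosen sub-rectangle. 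After $s$ iterations, either a yes-instance has been produced, or the accumulated density gain is $\Omega(s \log N)$, contradicting $\mu(R) \geq 2^{-\gamma s \log N}$ for $\gamma$ small enough.

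The main obstacle will be the \emph{yes-instance extraction} branch. When the marginals overlap at some $v^*$ in coordinate $j$, we restrict both sides to $\{a_j = v^*\}$ and $\{b_j = v^*\}$ to produce a sub-rectangle; to certify a yes-instance we need a pair $(a,b)$ in this sub-rectangle with no further collision in coordinates $j' \neq j$. I plan to handle this by a recursive appeal to a ``fully-disjoint continuation'' variant of the structural lemma applied to the remaining $s-1$ coordinates, tracking how $\mu$ transforms under restriction so that the sub-rectangle's residual mass is still large enough to apply induction. Balancing the density budget across the collision-friendly and density-increment branches so that the recursion still yields a total lower bound of $\Omega(s \log N)$, rather than leaking a $\log$ factor, will be the technical heart of the argument.
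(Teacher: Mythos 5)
Your proposed ``structural lemma'' --- that every rectangle $R$ with $\mu(R) \geq 2^{-\gamma s\log N}$ must contain a yes-instance --- is \textbf{false}, and this kills the whole plan. Consider the product rectangle $R = (A_1\times\cdots\times A_s)\times(B_1\times\cdots\times B_s)$ where for each $j$ the sets $A_j,B_j\subseteq[N]$ are disjoint halves of $[N]$. Then $R$ contains no yes-instance at all, yet $|R\cap D_0| = (N/2)^{2s}$ while $|D_0| = (N(N-1))^s$, so $\mu(R) \approx 4^{-s}$. Since $4^{-s} \geq 2^{-\gamma s\log N}$ once $N \geq 4^{1/\gamma}$, this contradicts your lemma for every absolute constant $\gamma>0$. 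The same counterexample (Alice's support and Bob's support disjoint in each block) defeats the analogous claim for $s$-sparse unique-disjointness directly, so the failure is not an artifact of the reduction. This is exactly why the paper stresses that Theorem~\ref{thm: sdisk_dcc} is a \emph{deterministic-only} lower bound: the randomized complexity of $s$-UDISJ is $O(s)$, and any single-rectangle corruption-style argument of the kind you propose tends to be too weak to see the extra $\log(n/s)$ factor --- but here it is worse than weak, it is simply untrue.

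Consequently, Theorem~\ref{thm: sdisk_dcc} cannot be deduced from a ``large-density $0$-rectangle cannot exist'' statement the way Theorem~\ref{thm: kdisk_dcc} was. The paper's actual argument never tries to prove such a statement. It reduces to $s$-UEQUAL and then tracks a \emph{path} through the communication tree, maintaining at each visited node not merely a rectangle with large $\mu$-mass but a combinatorial witness called a \emph{square} $(I,S,\calA)$ that is thick and has large average degree. The density function lives on the square, $E(Q) = \log(|S|/|B|^{|I|})$, not on the $\mu$-mass of the rectangle. The engine is Corollary~\ref{lemma:monochromatic}: a node whose rectangle supports a thick square with $I\neq\emptyset$ cannot be a leaf, because the thick square forces both a yes- and a no-instance into the rectangle. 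So the path must continue until all of $I$ has been projected out; the bookkeeping (each communicated bit loses at most $2$ in density, each projected coordinate gains $\Omega(\log N)$) then forces the path length to be $\Omega(s\log(n/s))$. Your ``disjoint-marginals'' branch is the place where you would need to produce a density increment, but in that branch there is in fact no contradiction and no yes-instance to be found; the paper sidesteps this by never claiming one rectangle alone is contradictory and instead amortizing over the whole path. If you want to salvage a corruption-style intuition, the right analogue is not ``large $0$-rectangle is impossible'' but ``a $0$-rectangle supporting a nontrivial thick square is impossible,'' which is a statement about structured sub-objects inside the rectangle, not about the rectangle's raw mass.
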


Prior to our work, Kushilevitz and Weinreb \cite{KW09} proved the same lower bound for a smaller range of $s\leq \frac{\log n}{\log \log n}$. Then Loff and Mukhopadhyay \cite{LMequality} improved this range to $s\leq n^{1/101}$. 
Our Theorem \ref{thm: sdisk_dcc} further pushes this range to $\approx n^{1/2}$. 

Our proof of Theorem \ref{thm: sdisk_dcc} is built on \cite{LMequality} with several differences where the main difference is that we no longer fully simulate the communication tree by a decision tree. Instead, we aim to find a long path in the communication tree. The approach was suggested by Yang and Zhang \cite{Yang2022SimulationMI}. We believe it is possible to further improve the range to all $s\geq 1$ and discuss more details in Section \ref{sec: sparse_DCC}.

A similar task is to prove a deterministic lower bound for sparse set-disjointness without the unique requirement. 
In this setting, H{\aa}stad and Wigderson pointed out \cite{haastad2007randomized} a same $\Omega(s\cdot\log(n/s))$ bound can be proved via the rank method in \cite{jukna2011extremal}. 
However, in the unique setting, \cite{KW09} showed that the rank method is impossible to achieve such tight bounds.

We emphasize that Theorem \ref{thm: sdisk_dcc} is a lower bound only for \textit{deterministic} communication complexity. 
Allowing public randomness and a constant error, there exists a protocol that costs only $O(s)$ bits \cite{haastad2007randomized}. 
Therefore, this $\log(n/s)$ factor is a \textit{separation} between randomized communication and deterministic communication. 
This also implies all lower bound techniques that simultaneously imply randomized communication lower bounds, including information complexity approaches, cannot reprove our bounds.

Furthermore, Braverman \cite{braverman2012interactive} gave a zero-error protocol for $1$-sparse unique-disjointness with constant information cost which can be extended to all $s\geq 1$. 
\begin{lemma}
For any $s>0$. There is a zero-error protocol for $s$-sparse unique-disjointness with information cost $O(s)$.
\end{lemma}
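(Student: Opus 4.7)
The natural strategy is to reduce $s$-sparse unique-disjointness to $O(s)$ independent instances of $1$-sparse unique-disjointness and then invoke Braverman's zero-error $1$-sparse protocol with information cost $O(1)$ on each instance. The reduction is realized by public-coin hashing: if we hash the universe $[n]$ into enough buckets so that, with constant probability, both $A$ and $B$ are mapped injectively, then a common element can only appear inside a bucket that holds exactly one element from each party, and on every such bucket we have a genuine $1$-sparse instance that the subroutine can solve.

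\textbf{Key steps.} Let $\Pi_{1}$ denote Braverman's zero-error $1$-sparse protocol with constant information cost. Using public randomness, Alice and Bob sample a pairwise independent hash $h:[n]\to[Cs^{2}]$ for a suitable constant $C$, and each party locally checks whether $h$ is injective on their set. Each party sends one bit indicating the outcome; if either party reports a collision the round is aborted and the parties restart with fresh public randomness. By the birthday bound, a single round succeeds with probability at least $1/2$. In a successful round, for every bucket containing exactly one element from Alice and one from Bob---there are at most $s$ such buckets, since $|A|,|B|\le s$---the parties invoke $\Pi_{1}$ to test equality of those two elements, and output \textsf{yes} iff some invocation returns \textsf{yes}. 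Correctness follows because the promised unique intersection element (if any) lands in the same bucket on both sides under any hash $h$, and in a successful round that bucket triggers an equality test which returns \textsf{yes}. The information cost of a successful round is at most $s\cdot O(1)=O(s)$, the abort/retry signalling costs $O(1)$ per round, and the expected number of rounds is $O(1)$, yielding overall expected information cost $O(s)$.

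\textbf{Main obstacle.} The subtle point is reconciling the strict zero-error guarantee with an \emph{expected} information cost bound through the abort-and-retry mechanism. One must argue that with probability one some round eventually succeeds, and simultaneously sum per-round information contributions geometrically to obtain the $O(s)$ bound. A clean way to handle this is to view the entire transcript, including the index of the first successful round, as a single random object and apply the chain rule of $\mutualent(X;\Pi\mid Y)+\mutualent(Y;\Pi\mid X)$ round by round, using the geometric tail on the number of retries to control the sum. A secondary technical concern is that the public randomness used in different rounds must be independent, so that no particular input $(A,B)$ can be adversarially stuck in an infinite abort loop; this is easily arranged by drawing a fresh block of public random bits per round.
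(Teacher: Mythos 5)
The paper does not actually supply a proof of this lemma; it is stated as a consequence of Braverman's zero-error equality protocol together with the remark that it ``can be extended to all $s\ge1$.'' So you are proving something the paper leaves implicit, which is fine, but your argument as written has a real gap.

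The gap is in the step where, after a collision-free hash round, you say the parties ``invoke $\Pi_1$ for every bucket containing exactly one element from Alice and one from Bob'' and charge $s\cdot O(1)=O(s)$ information. You never explain how Alice and Bob agree on which of the $Cs^2$ buckets are doubly occupied, and this is where the cost hides. The natural implementation is for each party to announce which buckets she occupies; but with $\Theta(s^2)$ buckets and at most $s$ occupied ones, the occupancy pattern already carries $\log\binom{Cs^2}{s}=\Theta(s\log s)$ bits of information about the input, conditioned on the hash, so the protocol's internal information cost is $\Omega(s\log s)$, not $O(s)$. The alternative of silently running $\Pi_1$ on all $Cs^2$ buckets does not obviously help either: for zero error, a bucket-level subroutine must be able to certify a ``disjoint'' answer even when one side is empty, and the straightforward way to do that (e.g.\ each party signals emptiness) again leaks $\Theta(\log s/s)$ bits per bucket, summing to $\Theta(s\log s)$. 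To actually get $O(s)$ one would need either to hash into only $O(s)$ buckets (so that the occupancy pattern itself costs $O(s)$, at the price of handling multi-element buckets, e.g.\ by recursion), or to design the per-bucket zero-error subroutine so that an empty party plays a random dummy element and the emptiness indicator is never revealed except when the equality test fires, with a careful information accounting. Neither route is spelled out in your write-up.

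Your handling of the abort-and-retry loop and of the expected number of rounds is fine, and the overall strategy of reducing to Braverman's $1$-sparse protocol is a reasonable plan; the missing ingredient is an information-efficient way to localize the at-most-$s$ nontrivial buckets without paying the extra $\log s$ factor.
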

Overall, Theorem \ref{thm: sdisk_dcc} demonstrates that the density increment argument has fewer restrictions on communication models and is able to circumvent such barriers by rank methods and information complexity.

\subsection{Our techniques}
\label{sec: techniques}

Here we give an overview of our proof technique and discuss connections to lifting theorems and information complexity. 
We focus on Theorem \ref{thm: kdisk_dcc}. 
We recall that \textsf{no} instances are
\[
D_{0}:=\{(x_1,\dots,x_k)\in(\{0,1\}^{n})^{k}: \forall i, x_{1}(i) + \dots + x_{k}(i)\leq 1 \}.
\]
Our main idea of Theorem \ref{thm: kdisk_dcc} is a density increment argument. In this argument, we first define the density of a rectangle $R$ on $\Noinstance$ by
\[
E(R):=\frac{|R\cap \Noinstance|}{|\Noinstance|}=\frac{|R\cap \Noinstance|}{(k+1)^{n}}.
\]
It is clear that $E(R)\leq 1$ because $R\cap \Noinstance\subseteq \Noinstance$. 
Now Theorem \ref{thm: kdisk_dcc} is equivalent to say that any rectangle $R$ with density $E(R)\geq 2^{-o(n/k)}$ cannot be monochromatic.

Let $R$ be any \textit{monochromatic rectangle} containing only \textsf{no} instances. 
We will perform a projection operation to convert $R$ into another \textit{monochromatic rectangle} $R'\subseteq (\{0,1\}^{n-1})^{k}$ with a larger density $E(R')\geq E(R)\cdot (1+1/k)$. 
Now, since $R'$ is still monochromatic, we repeat this projection for $n$ rounds where each round increases the density by a factor of $(1+1/k)$. 
Let $R^*$ be the rectangle after $n$ projections, we then have
\[
E(R^{*})\ge E(R)\cdot (1+1/k)^{n}.
\]
Combining $E(R^*)\le1$ and $E(R)\ge2^{-o(n/k)}$, this gives a contradiction.

Now we briefly explain our projection process. 
Let $R = X_1\times \dots\times X_{k}\subseteq (\{0,1\}^{n})^{k}$ be a monochromatic rectangle. For each party $j\in[k]$, the projection of $R$ on $j$ is a rectangle $\Pi_{j}(R) =X_{1}'\times\cdots\times X_{k}'\subseteq(\{0,1\}^{n-1})^{k}$ defined by:
\begin{itemize}
\item For each party $j'\neq j$, $X_{j'}':=\{x'\in \{0,1\}^{n-1}:  (x',0)\in X_{j'}\}$.
\item For the party $j$, $X_{j}':=\{x'\in \{0,1\}^{n-1}:\text{ either }(x',0)\in X_{j}\text{ or } (x',1)\in X_{j}\}$.
\end{itemize}
It is not hard to see that $\Pi_{j}(R)$ (for any $j$) preserves the monochromatic property of $R$. 
On the other hand, we show there exists a party $j\in[k]$, such that $\Pi_{j}(R)$ has a larger density compared to $R$. In fact, this density increment captures the communication cost of party $j$. 

We give a full proof of Theorem \ref{thm: kdisk_dcc} in Section \ref{sec: weak_dcc}. We suggest readers begin with Section \ref{sec: weak_dcc}, and then proceed to Section \ref{sec: rcc_vs_udisj} and Section \ref{sec: sparse_DCC}.

\paragraph{Connections to lifting theorems.}
Query-to-communication lifting theorems are a generic method to lift the hardness of one-party functions (decision tree lower bounds) to two-party lower bounds in the communication model. 
This recent breakthrough gives many applications in diverse areas \cite{clique_independent,huynh2012virtue,goos2018deterministic,goos2018extension,de2016limited,de2019lower,ggks18,CKLM18,GR18,Nashpotentialgame,robere2016exponential,pitassi2017strongly,lee2015lower,CLRS16,KMR17,approximatelogrank,goos2022lower}. The simulation method is widely used to prove such lifting theorems. 
During the simulation, we maintain some pseudorandom properties to enforce the communication protocol to be similar to a query protocol. In this process, a potential function is always used to capture the number of communication bits against the pseudorandom property.

In this paper, we adopt the idea of the potential function argument and rephrase it as a density increment argument. In lifting theorems, the simulation is mainly used to maintain a good structure of rectangles, such as maintaining full-range rectangles. In contrast, our density increment argument is more flexible.

\paragraph{Connections to information complexity.} 
The information complexity is another important tool to prove communication complexity lower bounds. It aims to analyze the mutual information between communication transcripts and the random inputs held by Alice and Bob. From information-theoretical perspective, the randomized communication complexity is then lower bounded by the mutual information. To analyze the mutual entropy, a very useful tool here is called the direct-sum argument \cite{CSWY01,bar2004information,CKX03,Gro,Jay}. Roughly speaking, based on this argument, we only need to analyze the mutual information of the communication transcripts and each coordinate of Alice and Bob's input. This step significantly reduces the difficulty of analyzing mutual entropy. 

In our proof (see Section \ref{sec: rcc_vs_udisj} for example), we also use the local-to-global strategy. The difference is that the information complexity paradigm \cite{bar2004information,CKX03,Gro,Jay} uses a direct-sum argument, which is a useful tool in information theory. In comparison, the density increment arguments use a combinatorial operation called projection. The projection provides more flexibility in different applications.  As \cite{DOR21} pointed out many applications in other settings are not amenable to the standard direct sum argument, such as proving information-theoretic lower bounds for the number-on-forehead model. In the density increment arguments, we do not see such barriers for now.

\subsection{Potential applications of explicit proofs}
\label{sec: potential}

We discuss some potential applications of our density increment arguments.

\paragraph{Different communication models.}
As an active research area, many techniques were invented to prove communication complexity lower bounds in past decades. However, many of these techniques are specific to only one communication model. 
For example, the rank method mainly applies to deterministic communication; information complexity is usually used for randomized communication. We believe our density increment arguments provide more flexibility, having less dependency on the communication models. For example, in Theorem \ref{thm: kdisk_dcc}, we study a lower bound similar to (but not exactly) corruption bounds; in Theorem \ref{thm: kdisk_rcc}, we prove randomized communication lower bounds; in Theorem \ref{thm: sdisk_dcc}, we show deterministic lower bounds (it is a separation from randomized communication).
Overall, we demonstrate that (at least for the set-disjointness problems) the density increment arguments combine both advantages of lifting theorems and information complexity. 
Till now, some communication models are still not fully understood. For example, the $(\exists -1)$-game is an interesting communication model with applications in extension complexity \cite{goos2018extension}. 
However, to our best knowledge, we still do not have a generic way to prove extension complexity lower bounds. 
Another example is the number-on-forehead model. It would be interesting to see if density increment arguments give new applications in various communication models.

\paragraph{Streaming lower bounds.}
The connections between communication complexity and streaming lower bounds were explored by a seminal work of Alon, Matias and Szegedy \cite{alon1999space}. 
\cite{alon1999space} proved a streaming lower bound for \textit{frequency moment estimation} based on a reduction from unique-disjointness lower bounds. 
After that, many subsequent works made a lot of efforts to improve the lower bounds \cite{bar2004information,CKX03,Gro,chakrabarti2008robust,andoni2008better,Huang09} for this problem. 
As \cite{crouch2016stochastic} also pointed out, any improved lower bounds of frequency moment estimation can be automatically applied to improve lower bounds for many other streaming problems.

However, the optimal bound for this fundamental problem\footnote{We focus on the random order streaming model.} is still not clear\footnote{\cite{Huang09} claimed a tight bound, however, \cite{crouch2016stochastic} pointed there is a flaw in \cite{Huang09}.}. 
To the best of our knowledge, all current lower bounds rely on (black-box) reductions from $\kdisj$ lower bounds. 
As we discussed, an $\Theta(n/k)$ bound for randomized communication of $\kdisj$ is already tight, and the black-box reduction seems a dead end in achieving tight bounds for frequency moment estimation.

To resolve this barrier, we believe an important step is to open the black box. Put differently, we should extend communication complexity lower-bounds techniques into streaming models. Since our proof of $\kdisj$ has fewer restrictions to models, it is reasonable to try this argument on streaming settings. Concretely, could we prove a tight lower bound of frequency moment estimation by the density increment argument?

\paragraph{Coordinate-wise correlated hard distributions.} 
Many proofs of randomized communication complexity lower bounds start with Yao's minimax theorem and design a hard distribution.
In some important applications, the hard distribution has a strong correlation between input coordinates. 
A good example is \textit{Tseitin problems}, the lower bounds of which can be converted into lower bounds in proof complexity  \cite{goos2018communication}, extension complexity \cite{goos2018extension}, monotone 
computation lower bounds \cite{pitassi2017strongly}. 
However, the hard distribution of Tseitin has complicated coordinate-wise correlation, which makes the information complexity argument difficult to use. 
Known lower bounds for randomized communication \cite{goos2018communication, goos2018extension} all lose a $\log n$ factor (including one based on a black-box reduction from two-party unique-disjointness \cite{goos2018communication}). Again, it seems the loss cannot be avoided in black-box reductions and it is very interesting to see if our density increment arguments are able to break this barrier.

\paragraph{Acknowledgements.} Authors thank Shachar Lovett and Xinyu Mao for helpful discussions. We are grateful to Kewen Wu for reading the early versions of this paper and providing useful suggestions.

\section{Preliminary}\label{sec: preliminary}

For integer $n\ge0$, we use $[n]$ to denote the set $\{1,2,\ldots,n\}$.
Throughout, $\log(\cdot)$ is the logarithm with base $2$.
For a finite domain $X$, we use $x\sim X$ to denote a random variable $x$ uniformly distributed over $X$. 

\begin{definition}[Entropy]
Let $D$ be a random variable on $X$. The entropy of $D$ is defined by
\[
\ent(D):=\sum_{x}\Pr[D=x]\cdot\log(1/\Pr[D=x]).
\]

Let $A$ and $B$ be two random variables on $X$ and $Y$ respectively. the conditional entropy of $A$ given $B$ is defined by
\[
\ent(A|B) = \sum_{y\in Y}\Pr[B=y] \cdot \sum_{x\in X}\Pr[A=x|B=y]\cdot\log(1/\Pr[A=x|B=y]).
\]
\end{definition}

\begin{definition}[Mutual information]
Let $A$ and $B$ be two (possibly correlated) random variables on $X$ and $Y$ respectively.
The mutual information of $A$ and $B$ is defined by 
\[
\mutualent(A:B) = \ent(A) - \ent(A|B) = \ent(B) - \ent(B|A).
\]

Let $C$ be a random variable on $Z$, the  conditional mutual information of $A$ and $B$ given $C$ is defined by
\[
\mutualent(A:B|C) = \ent(A|C) - \ent(A|B,C) = \ent(B|C) - \ent(B|A,C).
\]
\end{definition}

We use several basic properties of entropy and mutual information.

\begin{fact}\label{fact: entropy}
Let $A$ and $B$ be two (possibly correlated) random variables on $X$ and $Y$ respectively.
\begin{enumerate}
     \item Conditional entropy inequality: $\ent(B|A) \leq \ent(B)$.
    \item Chain rule: $\ent(A,B) = \ent(B) + \ent(B|A) =  \ent(A) + \ent(A|B)$.
    \item Nonnegativity: $\mutualent(A:B)\geq 0$.
    \item $\mutualent(A:B)\leq\max\{\ent(A),\ent(B)\}$. 
\end{enumerate}
\end{fact}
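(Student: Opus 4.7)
The plan is to derive all four items of Fact~\ref{fact: entropy} from a single non-trivial ingredient, namely Gibbs' inequality $\sum_i p_i\log(p_i/q_i)\ge 0$ for probability distributions $p,q$ on the same finite set. The remaining work is essentially symbol manipulation from the definitions of $\ent$ and $\mutualent$, so I would organize the proof to avoid redoing the same algebra in each item.

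First I would establish item 2 (the chain rule), which is purely algebraic. Starting from
\[
\ent(A,B) = \sum_{x,y}\Pr[A=x,B=y]\log\bigl(1/\Pr[A=x,B=y]\bigr),
\]
I factor $\Pr[A=x,B=y] = \Pr[A=x]\cdot\Pr[B=y\mid A=x]$ inside the $\log$ so that it splits additively. Summing out $y$ in the first piece yields $\ent(A)$, while the second piece matches the definition of $\ent(B\mid A)$ after regrouping by $x$. The symmetric identity $\ent(A,B) = \ent(B) + \ent(A\mid B)$ follows by relabeling.

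Next I would prove item 3. Combining the chain rule with the definition $\mutualent(A:B)=\ent(A)-\ent(A\mid B)$ and carrying out the same kind of algebra gives
\[
\mutualent(A:B) = \sum_{x,y}\Pr[A=x,B=y]\log\frac{\Pr[A=x,B=y]}{\Pr[A=x]\Pr[B=y]},
\]
so $\mutualent(A:B)$ is exactly the KL divergence between the joint distribution of $(A,B)$ and the product of its marginals. Gibbs' inequality then gives $\mutualent(A:B)\ge 0$. For Gibbs itself I would appeal to the standard one-line argument using $\ln t \le t - 1$ applied to $t = q_i/p_i$.

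From here, items 1 and 4 are immediate corollaries and I would not prove them from scratch. Item 1 is just the rearrangement $\ent(B\mid A) = \ent(B) - \mutualent(A:B) \le \ent(B)$, using item 3. For item 4, every term in the definition of $\ent(A\mid B)$ has the form $p\log(1/p)$ with $p\in[0,1]$ and is therefore non-negative, so $\ent(A\mid B)\ge 0$; hence $\mutualent(A:B) = \ent(A) - \ent(A\mid B)\le \ent(A)$, and the symmetric bound $\mutualent(A:B)\le \ent(B)$ follows the same way, giving a bound by $\min\{\ent(A),\ent(B)\}$ and hence by $\max\{\ent(A),\ent(B)\}$. The only genuine analytic content in the entire proof is Gibbs' inequality; the ``main obstacle,'' such as it is, is really just arranging the dependency order so that items 1 and 4 fall out of items 2 and 3 rather than being reproven independently.
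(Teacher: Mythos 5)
Your proof is correct. The paper states Fact~\ref{fact: entropy} without proof, treating these as standard identities from information theory, so there is no ``paper's approach'' to compare against; your derivation is the textbook one (chain rule by algebraic factoring, nonnegativity of mutual information via Gibbs' inequality applied to the KL divergence between the joint and the product of marginals, with items 1 and 4 as corollaries). One small remark worth keeping: in item 4 you in fact establish the stronger bound $\mutualent(A:B)\le\min\{\ent(A),\ent(B)\}$, of which the stated $\max$ bound is an immediate weakening; this is the natural form and it is what the paper implicitly uses anyway (in the remark after Theorem~\ref{ic_cost}, where $\ent(\calR)$ alone is used as an upper bound).
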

\section{Deterministic lower bound for multi-party unique-disjointness}
\label{sec: weak_dcc}
In this section, we give a simple proof for Theorem \ref{thm: kdisk_dcc}. Our proof is based on a density increment argument. We first formally define the problem. In our writing, we use binary strings to represent a set. 
We associate any set $A\subseteq[n]$ with a corresponding string $x\in\{0,1\}^{n}$ by setting $x(i)=1$ iff $i\in A$.

\begin{definition}[$\kdisj$, deterministic version]
\label{def: kdisj}
For each $k\geq 2$ and $n\geq 1$. 
We define $D_{0}$ (\textsf{no} instances) and $D_{*}$ (\textsf{yes} instances) as follows:
\begin{itemize}
    \item $D_{0}:=\{(x_1,\dots,x_k)\in(\{0,1\}^{n})^{k}: \forall i, x_{1}(i) + \dots + x_{k}(i)\leq 1 \}$.
    \item $D_{*}:=\{(x_1,\dots,x_k)\in(\{0,1\}^{n})^{k}:\exists \ell, x_{1}(\ell) = \cdots =x_{k}(\ell)=1\text{ and } \forall i\neq \ell, x_{1}(i) + \dots + x_{k}(i)\leq 1 \}$.
\end{itemize}
A $k$-party deterministic communication protocol $C$ solves $\kdisj$ if,
\begin{itemize}
    \item for all $(x_1,\dots,x_{k})\in D_0$, $C(x_1,\dots,x_k)=0$,
    \item for all $(x_1,\dots,x_{k})\in D_*$, $C(x_1,\dots,x_k)=1$.
\end{itemize}
\end{definition}

\noindent Since the projection may fix some coordinates, we also define the projected instances. For a set $I\subseteq[n]$, we define $\Noinstance^{I}$ (\textsf{no} instances on $I$) as
\[
D_{0}^{I}:=\{(x_1,\dots,x_k)\in(\{0,1\}^{I})^{k}: \forall i\in I, x_{1}(i) + \dots + x_{k}(i)\leq 1 \},
\]
and define $\Yesinstance^I$ (\textsf{yes} instances on $I$) as
\[
D_{*}^I:=\{(x_1,\dots,x_k)\in(\{0,1\}^{I})^{k}:\exists i\in I, x_{1}(i) = \cdots =x_{k}(i)=1\text{ and } \forall i'\neq i, x_{1}(i') + \dots + x_{k}(i')\leq 1 \}.
\]
We also partition the \textsf{yes} instances by $D_{*}^{I}= \bigcup_{i\in I} D_{i}^{I}$, where
\[
D_{i}^I:=\{(x_1,\dots,x_k)\in(\{0,1\}^{I})^{k}: x_{1}(i) = \cdots =x_{k}(i)=1\text{ and } \forall i'\neq i, x_{1}(i') + \dots + x_{k}(i')\leq 1 \}.
\]

Now we define the density function.

\begin{definition}[Density function]
For each $I\subseteq[n]$ and $R = X_1\times \dots\times X_{k}\subseteq (\{0,1\}^{I})^{k}$, we define its density function as 
\[
E^{I}(R):=\log\left(\frac{|R\cap D^I_{0}|}{(k+1)^{|I|}}\right).
\]
\end{definition}

\noindent Note that $E^{I}(R)\leq 0$ for any rectangle $R$ because $|D_0^{I}| = (k+1)^{|I|}$. We simplify the notation as $E(R)$ if $I$ is clear in the context. A crucial step in our argument is the projection operation. 
\begin{definition}[Projection]
Let $R = X_1\times \dots\times X_{k}\subseteq (\{0,1\}^{I})^{k}$ be a rectangle. For an $i\in I$ and $j\in[k]$, the projection of $R$ on $(i,j)$ is a rectangle $\Pi_{i,j}(R) =X_{1}'\times\cdots\times X_{k}'\subseteq(\{0,1\}^{I\setminus\{i\}})^{k}$ defined by:
\begin{itemize}
    \item for each $j'\neq j$, $X_{j'}':=\{x'\in \{0,1\}^{I\setminus \{i\}}:  (x',0)\in X_{j'}\}$,
    \item for $j$, $X_{j}':=\{x'\in \{0,1\}^{I\setminus \{i\}}:\text{ either }(x',0)\in X_{j}\text{ or } (x',1)\in X_{j}\}$.
\end{itemize}
Here $(x',0)$ is a string in $\{0,1\}^{I}$ by extending $x'\in\{0,1\}^{I\setminus\{i\}}$ with $x_{i}=0$.
\end{definition}

The projection operation has two useful properties. The first one is that projection preserves the monochromatic property of the rectangle.

\begin{fact}
\label{fact: k_u_m}
Let $R$ be a rectangle such that $R\cap D_{*}^{I}=\emptyset$. Then for every $i\in I$ and $j\in[k]$, we have 
\[
\Pi_{i,j}(R) \cap D_{*}^{I\setminus\{i\}}=\emptyset.
\]
\end{fact}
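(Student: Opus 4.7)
The plan is to argue by contradiction: assume some $(y_1, \ldots, y_k) \in \Pi_{i,j}(R) \cap D_*^{I\setminus\{i\}}$, and then lift it to a point of $R \cap D_*^I$, contradicting the hypothesis that $R$ misses $D_*^I$. Since $(y_1,\ldots,y_k) \in D_*^{I\setminus\{i\}}$, there is a coordinate $\ell \in I \setminus \{i\}$ on which $y_1(\ell) = \cdots = y_k(\ell) = 1$, while $y_1(i') + \cdots + y_k(i') \le 1$ for every $i' \in I \setminus \{i,\ell\}$. I will use this $\ell$ as the distinguished ``intersection'' coordinate in the lifted instance.

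Next I would unpack what membership in $\Pi_{i,j}(R) = X_1' \times \cdots \times X_k'$ says about $R = X_1 \times \cdots \times X_k$. For every $j' \neq j$, the definition of $X_{j'}'$ gives $(y_{j'}, 0) \in X_{j'}$, where I write $(z,b)$ for the extension of $z \in \{0,1\}^{I \setminus\{i\}}$ that places bit $b$ at coordinate $i$. For the distinguished party $j$, membership $y_j \in X_j'$ means that at least one of $(y_j, 0)$ or $(y_j, 1)$ lies in $X_j$. I will handle these two subcases.

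In either subcase the lift is straightforward. If $(y_j, 0) \in X_j$, set $x_{j'} := (y_{j'}, 0)$ for all $j' \in [k]$; then $(x_1,\ldots,x_k) \in R$, coordinate $\ell$ is an all-ones column, coordinate $i$ is all-zero so contributes sum $0 \le 1$, and every other coordinate inherits the sum $\le 1$ from the original yes instance, so $(x_1,\ldots,x_k) \in D_\ell^I \subseteq D_*^I$. If instead $(y_j, 1) \in X_j$, set $x_j := (y_j, 1)$ and $x_{j'} := (y_{j'}, 0)$ for $j' \neq j$; again $(x_1,\ldots,x_k) \in R$, coordinate $\ell$ is the all-ones column certifying a yes instance, coordinate $i$ has exactly one $1$ (from party $j$) so its column sum is $1 \le 1$, and the remaining coordinates are unchanged. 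In both cases we have produced an element of $R \cap D_*^I$, contradicting the hypothesis.

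No real obstacle is expected here; the only mild subtlety is remembering that the projection is asymmetric in the parties, so that party $j$ is the unique one allowed to have either extension of $y_j$ at coordinate $i$, while the others are forced to the $0$-extension. That asymmetry is precisely what makes the case split above exhaustive and harmless, since setting the $i$-th bits of all parties $j' \neq j$ to $0$ cannot disturb the disjointness condition outside coordinate $\ell$.
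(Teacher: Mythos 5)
Your proof is correct, and it is precisely the routine check that the paper omits with the remark that Fact~\ref{fact: k_u_m} ``follows from the definition.'' The case split on whether $(y_j,0)$ or $(y_j,1)$ witnesses membership of $y_j$ in $X_j'$ is exactly what the asymmetric definition of $\Pi_{i,j}$ calls for, and both lifts land in $D_\ell^I\subseteq D_*^I$ as you say.
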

The proof of Fact \ref{fact: k_u_m} follows from the definition and we omit it here. The next property is phrased as the following projection lemma.
\begin{lemma}[Projection lemma]
\label{lemma: k_u_p}
Let $R = X_1\times \dots\times X_{k}\subseteq (\{0,1\}^{I})^{k}$ be a rectangle. If there is a coordinate $i\in I$ such that $R\cap D_{i}^{I} = \emptyset$, then there is some $j\in[k]$ such that
\[
E^{I\setminus \{i\}}(\Pi_{i,j}(R)) \geq E^{I}(R) + 1/k.
\]
\end{lemma}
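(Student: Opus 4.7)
The plan is to transfer the counts into the projected space $(\{0,1\}^{I'})^k$ with $I' := I\setminus\{i\}$ and reduce the projection lemma to a short combinatorial inequality. For each party $j\in[k]$ and bit $b\in\{0,1\}$ I set $Y_j^b := \{x'\in\{0,1\}^{I'} : (x',b)\in X_j\}$, and introduce the rectangles $R_0 := Y_1^0\times\cdots\times Y_k^0$ and, for $j\in[k]$, $R_j := Y_1^0\times\cdots\times Y_{j-1}^0\times Y_j^1\times Y_{j+1}^0\times\cdots\times Y_k^0$ in $(\{0,1\}^{I'})^k$. Abbreviate $a_j := |R_j\cap D_0^{I'}|$ for $j\in\{0,1,\dots,k\}$ and $c_j := |R_0\cap R_j\cap D_0^{I'}|$ for $j\in[k]$; note $R_0\cap R_j = (Y_j^0\cap Y_j^1)\times\prod_{j'\neq j}Y_{j'}^0$.

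Next I would stratify $R\cap D_0^I$ by the pattern at coordinate $i$ (all-zero, or only some party $j$ has a $1$) to get $|R\cap D_0^I| = a_0+a_1+\cdots+a_k$. Observing that $\Pi_{i,j}(R)$, viewed as a rectangle in the projected space, equals $R_0\cup R_j$ (the two rectangles differ only in the $j$-th factor), inclusion-exclusion gives $|\Pi_{i,j}(R)\cap D_0^{I'}| = a_0+a_j-c_j$. So the desired inequality $E^{I\setminus\{i\}}(\Pi_{i,j}(R)) \ge E^I(R) + 1/k$ for some $j\in[k]$ is equivalent to finding $j$ with $(k+1)(a_0+a_j-c_j) \ge 2^{1/k}(a_0+a_1+\cdots+a_k)$.

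The main obstacle is controlling the overlap terms $c_j$, and this is the only place the hypothesis $R\cap D_i^I = \emptyset$ enters. In the projected space the hypothesis reads $(Y_1^1\times\cdots\times Y_k^1)\cap D_0^{I'} = \emptyset$. For each $\mathbf{x}\in R_0\cap D_0^{I'}$ set $c(\mathbf{x}) := |\{j\in[k] : x_j\in Y_j^1\}|$; double counting gives $\sum_{j=1}^k c_j = \sum_{\mathbf{x}\in R_0\cap D_0^{I'}} c(\mathbf{x})$. If $c(\mathbf{x}) = k$ then $x_j\in Y_j^1$ for every $j$, forcing $\mathbf{x}\in (Y_1^1\times\cdots\times Y_k^1)\cap D_0^{I'}$ and contradicting the hypothesis. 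Hence $c(\mathbf{x})\le k-1$ for every such $\mathbf{x}$, which yields $\sum_{j=1}^k c_j \le (k-1)a_0$.

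From here the conclusion follows by averaging: $\sum_{j=1}^k (a_0+a_j-c_j) \ge ka_0 + (a_1+\cdots+a_k) - (k-1)a_0 = a_0+a_1+\cdots+a_k$, so some $j\in[k]$ satisfies $a_0+a_j-c_j \ge (a_0+\cdots+a_k)/k$, giving a ratio $(k+1)(a_0+a_j-c_j)/(a_0+\cdots+a_k) \ge (k+1)/k = 1+1/k$. The elementary inequality $\log_2(1+1/k) \ge 1/k$ (equivalently $(1+1/k)^k \ge 2$, valid for every $k\ge 1$) then upgrades this ratio into the additive density increment $E^{I\setminus\{i\}}(\Pi_{i,j}(R)) - E^I(R) \ge 1/k$, completing the proof.
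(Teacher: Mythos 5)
Your proof is correct, and it arrives at the same quantitative conclusion as the paper's proof via the same averaging-over-$j$ mechanism, but with a different and arguably cleaner decomposition. The paper partitions the projected set $L$ by the size of the extension set $\ext(x')$, splitting into $A$ (size $\geq 2$) and $B$ (size $1$), and handles the two parts separately before recombining. You instead stratify $R\cap D_0^{I}$ by its value at coordinate $i$ into the rectangles $R_0,R_1,\dots,R_k$, giving the exact identity $|R\cap D_0^{I}| = a_0 + \cdots + a_k$, and then compute $|\Pi_{i,j}(R)\cap D_0^{I'}| = a_0 + a_j - c_j$ by inclusion-exclusion. The hypothesis $R\cap D_i^I = \emptyset$ enters in both proofs to shave the merging factor from $k+1$ to $k$: the paper phrases it as $|\ext(x')|\leq k$ for $x'\in L$, you phrase it as $c(\mathbf{x})\leq k-1$ for $\mathbf{x}\in R_0\cap D_0^{I'}$ (these two bounds are equivalent since $|\ext(\mathbf{x})| = 1 + c(\mathbf{x})$ on $R_0$). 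What your route buys is that the bound $\sum_{j}|\Pi_{i,j}(R)\cap D_0^{I'}| \geq |R\cap D_0^I|$ falls out of a single clean double-counting argument, without needing the case split into $A$ and $B$, so the averaging step and the final computation with $\log(1+1/k)\geq 1/k$ are identical. Overall this is the same proof idea repackaged with more symmetric bookkeeping.
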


Given Lemma \ref{lemma: k_u_p} and Fact \ref{fact: k_u_m}, Theorem \ref{thm: kdisk_dcc} becomes straightforward. We can simply repeat the projection $n$ times for $i\in[n]$, where each time we use Lemma \ref{lemma: k_u_p} to choose a good $j$ to do projection and increase the density function by $1/k$.
Now we prove Lemma \ref{lemma: k_u_p}.

\begin{proof}[Proof of Lemma \ref{lemma: k_u_p}]
Let $R$ be a rectangle such that $R\cap D_{i}^{I} =\emptyset$. 
Let $I':=I\setminus\{i\}$ and 
\[
L:= \{x'\in D_{0}^{I'}: \exists x\in R\cap D_0^{I},~ x|_{I'} = x'\}.
\]
Here $x|_{I'}$ is the restriction of $x$ on $I\setminus\{i\}$. 
Note that for all $j\in [k]$, $\Pi_{i,j}(R)\cap D_{0}^{I'}=\Pi_{i,j}(R)\cap L$, and our goal is to show that there is a $j$ such that $|\Pi_{i,j}(R)\cap L|$ is large.

For every $x'\in L$, define the extension set of $x'$ as $\ext(x'):=\{x\in R\cap D_{0}^{I}: x|_{I'} = x'\}$.
Crucially, for every $x'\in L$, we have
\begin{align}\label{eq:ext_k}
|\ext(x')|\leq k.
\end{align}
Note that, without the condition $R\cap D_{i}^{I}=\emptyset$, it can only be bounded by $k+1$.
Inequality (\ref{eq:ext_k}) is proved by contradiction. 
If there is a $x'=(x_1',\dots,x_k')\in L$ such that $|\ext(x')|= k+1$.
Then we must have that $(x_1',1)\in X_1,\dots,(x_k',1)\in X_k$, contradicting $R\cap D_{i}^{I}=\emptyset$. 

We now continue our proof. 
Partition $L$ into two parts:
\[
A:=\{x'\in L: |\ext(x')|\geq 2 \}
\quad\text{and}\quad
B:=\{x'\in L: |\ext(x')|=1 \}.
\]
First observe that for any $x'=(x_1',\ldots,x_k')\in A$, we have $(x_j',0)\in X_j$ for every $j\in[k]$ as $R$ is a rectangle. 
This implies $x' \in \Pi_{i,j}(R)$ for all $j\in [k]$. Hence
\[
|A| = |A \cap \Pi_{i,j}(R)|,\quad\forall j\in[k].
\]
Applying (\ref{eq:ext_k}) with $x'\in A$, we have
\[
k\cdot |A \cap \Pi_{i,j}(R)|= k\cdot |A| \geq |\{x\in R\cap D_{0}^{I}: x|_{I'} \in A\}| ,\quad\forall j\in[k].
\]
For $x'\in B$, since $|\ext(x')|=1$, we have
\[
|\{x\in R\cap D_{0}^{I} :x|_{I'}\in B\}| = |B|.
\]
On the other hand, for every $x'\in L$, it always exists one $j\in [k]$ such that $x' \in \Pi_{i,j}(R)$. 
By an average argument, there is at least one $j\in [k]$ such that, 
\[
k\cdot |B \cap \Pi_{i,j}(R)|\geq |B| = |\{x\in R\cap D_{0}^{I} :x|_{I'}\in B\}|.
\]
As a result, for this fixed $j$ we have
\begin{align*}
k\cdot |L\cap \Pi_{i,j}(R)| 
&= k\cdot |B \cap \Pi_{i,j}(R)| + k\cdot|A\cap \Pi_{i,j}(R)|\\
&\geq |\{x\in R\cap D_{0}^{I}: x' \in A\}| + |\{x\in R\cap D_{0}^{I} :x' \in B\}| \\
&= |R\cap D_{0}^{I}|.
\end{align*}
By the definition of the density function, we have
\begin{align*}
E^{I'}(\Pi_{i,j}(R))
&=\log\left(\frac{|\Pi_{i,j}(R)\cap D_0^{I'}|}{(k+1)^{|I'|}}\right)
=\log\left(\frac{(k+1)\cdot|\Pi_{i,j}(R)\cap L|}{(k+1)^{|I|}}\right)\\
&\ge\log\left(\frac{(k+1)\cdot|R\cap D_0^I|}{k\cdot(k+1)^{|I|}}\right)
=E^I(R) + \log (1+1/k)\\
&\geq E^I(R) + 1/k.
\tag{since $\log(1+1/k)\ge1/k$ for all $k\ge1$}
\end{align*}
\end{proof}

\section{Randomized lower bound for multi-party unique-disjointness}
\label{sec: rcc_vs_udisj}

We focus on randomized communication lower bounds in this section. By Yao's minimax theorem, this is equivalent to identifying a distribution $\calP$ that is hard on average for any deterministic communication protocol. 
We use the notation $D_0,D_{*},D_{0}^{I},D_{*}^{I}$ same as the previous section (Definition \ref{def: kdisj}). 

Our hard distribution $\calP$ is supported on $D_0\cup D_*$.
\begin{definition}\label{hard_distribution}
For any $n,k\geq 1$, we define the hard distribution $\calP$ on $(\{0,1\}^{n})^{k}$ as follows.
\begin{framed}
\begin{enumerate}
    \item For every $i\in[n]$, uniformly and independently sample $\calW_{i}\sim [k]$ and $\calA_{i}\sim\{0,1\}$.
    \item For every $i\in[n],j\in[k]$, if $\calW_{i}=j$ and $\calA_{i}=1$, then set $x_{j}(i)=1$; otherwise set $x_j(i)=0$.
    \item Sample $\calB\sim \{0,1\}$ and $\mathcal{\ell}\sim[n]$ uniformly. If $\calB=1$, then update $x_{j}(\ell)=1$ for all $j\in [k]$.
    \item Output $x=(x_1,\dots,x_k)$.
\end{enumerate}
\end{framed}
Given this hard distribution $\calP$, we also define the distribution $\calQ:=(\calP\mid \calB=0)$.
\end{definition}
\noindent Now we give some explanation of the random variables in this sampling process. 
\begin{itemize}
    \item The bit $\calB$ determines whether to output \textsf{yes} instances or \textsf{no} instances.
    In particular, for $\calB=1$, we output a \textsf{yes} instance. Hence we update $x_{j}(\ell)=1$ for all $j\in [k]$, where $\ell$ is uniformly sampled.
    \item For every $i$, the variable $\calW_i\in [k]$ captures which party ($\calW_i$) may have the $i$-th element
    \item For every $i$, $\calA_i$ determines whether $\calW_i$ has the $i$-th element or not.
\end{itemize}

It is well-known that a deterministic protocol $C$ with communication complexity $c$ partitions the input domain into at most $2^{c}$ rectangles where each rectangle corresponds to a leaf in the communication tree. 
We then define the following random variable $\calR$, which is the rectangle of a random leaf induced by the input distribution $\calQ$.

\begin{definition}
For a fixed deterministic protocol $C$, we define a distribution $\calR_{C}$ on leaf rectangles (of $C$) as follows.
\begin{enumerate}
    \item Randomly sample $x\sim \calQ$.
    \item Output the rectangle $R$ of $C$ containing $x$.
\end{enumerate}
\end{definition}

We emphasize $\calR$ is defined by $\calQ$, which is \textit{not} $\calP$. Hence for a protocol $C$ with a small error on $\calP$, the random rectangle $R\sim\calR$ should be biased towards $\Noinstance$ with high probability.

If $C$ is clear in the context, we will simply use $\calR$ for $\calR_C$. 
For any rectangle $R$, we also use $\calQ_{R}$ to denote the distribution $(\calQ\mid \calR=R).$
Let $\calW=(\calW_1,\dots,\calW_n)$ and we use $(\calQ,\calW)$ to denote the joint distribution of $\calQ$ and $\calW$. 
We are now ready to explain our theorems.
\begin{theorem}\label{ic_cost}
Let $0<\epsilon<0.0001$ be a constant. For any deterministic protocol $C$ with error $\epsilon$ under $\calP$, i.e., 
\[
\Pr_{x\sim \calP}[C(x) = \kdisj(x)] \geq 1-\epsilon.
\]
We have
\[
\mutualent(\calQ,\calW:\calR_{C}) = \Omega(n/k).
\]
\end{theorem}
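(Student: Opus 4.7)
The plan is to lift the combinatorial density-increment argument of Section \ref{sec: weak_dcc} into an information-theoretic statement, with the per-projection gain $\log(1+1/k)\ge 1/k$ from Lemma \ref{lemma: k_u_p} playing the role of a per-coordinate mutual information. Since the coordinates of $(\calQ,\calW)$ are independent under the prior $\calQ$, the chain rule gives
$$\mutualent((\calQ,\calW):\calR_{C}) \;=\; \sum_{i=1}^{n}\mutualent\bigl((\calQ_{i},\calW_{i}):\calR_{C}\,\big|\,(\calQ,\calW)_{<i}\bigr),$$
so it suffices to prove that, in expectation over the prefix, each summand is at least $\Omega(1/k)$ for most coordinates $i$.

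Before analyzing one coordinate I would perform a ``near-monochromatic'' reduction. Because $\calP$ gives equal mass to $\Yesinstance$ and $\Noinstance$ and $C$ errs on at most an $\epsilon$-fraction under $\calP$, a standard averaging argument shows that with probability $1-O(\epsilon)$ over $R\sim\calR_{C}$ the leaf $R$ is labelled $0$ and the ratio $|R\cap \Yesinstance|/|R\cap \Noinstance|$ is tiny. This is the quantitative replacement for Fact \ref{fact: k_u_m}. Fixing the first $i-1$ coordinates of $(\calQ,\calW)$, the residual rectangle $R^{(i)}$ on coordinates $\{i,\ldots,n\}$ inherits the near-monochromaticity, and in particular $|R^{(i)}\cap D_{i}^{\{i,\ldots,n\}}|$ is small compared with $|R^{(i)}\cap D_0^{\{i,\ldots,n\}}|$ for most prefixes. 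At this point the counting in the proof of Lemma \ref{lemma: k_u_p}---in particular the bound $|\ext(x')|\le k$ from (\ref{eq:ext_k})---can be invoked almost verbatim to show that the distribution of $(\calQ_{i},\calW_{i})$ conditioned on $\calR_{C}=R$ is biased away from the prior by a multiplicative factor of at least $1+1/k$ on the ``all-zeros'' atom, and a short $\log(1+1/k)\ge 1/k$ calculation turns this density bias into an $\Omega(1/k)$ KL lower bound between posterior and prior, hence on the $i$-th summand in the chain rule.

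The main obstacle is making the per-coordinate argument robust against the $\epsilon$-error. Lemma \ref{lemma: k_u_p} is brittle: even a few witnesses in $R\cap D_{i}^{\{i,\ldots,n\}}$ spoil the bound $|\ext(x')|\le k$, and the residual rectangle after conditioning on a long prefix need not be close to a worst-case input. I expect to handle this by averaging ``badness'' over coordinates before invoking the projection count: a Markov step applied to the near-monochromatic reduction shows that for a fraction $1-o(1)$ of indices $i$ the slice $D_{i}^{\{i,\ldots,n\}}$ is negligible inside a typical residual leaf, and on the remaining coordinates we simply forfeit the $\Omega(1/k)$ contribution. Summing the good coordinates yields $\Omega(n/k)$ with constant slack, establishing the theorem for any sufficiently small constant $\epsilon$; this averaging, and the resulting need to keep the prefix-conditional projection argument from depending on previously exposed parties $\calW_{<i}$, is where the delicate part of the proof lives.
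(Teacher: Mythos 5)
Your high-level plan---decompose $\mutualent(\calQ,\calW:\calR_{C})$ coordinate by coordinate via the chain rule, show most leaves are near-monochromatic by an $\epsilon$-averaging argument, and then extract an $\Omega(1/k)$ contribution per good coordinate---does match the skeleton of the paper's proof (Lemma \ref{error_bias} plays the role of your near-monochromatic reduction, and iterating Lemmas \ref{Projection_lemma} and \ref{Bias_lemma} is the paper's version of summing the chain-rule terms). However, there is a genuine gap at the heart of the argument. The paper does not, as you propose, invoke the counting of Lemma \ref{lemma: k_u_p} (the $|\ext(x')|\le k$ bound) ``almost verbatim.'' That bound is an exact dichotomy that collapses as soon as $R\cap D_i$ is merely small rather than empty, and a ``short $\log(1+1/k)\ge 1/k$ calculation'' does not turn a small bias $\gamma\le\delta$ into an $\Omega(1/k)$ entropy deficit. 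The paper instead imports Lemma \ref{and_ic} (i.e.\ \cite[Theorem 3.16]{Gro}), a non-trivial analytic estimate about the information cost of a single AND gadget under small bias, and this lemma is precisely the quantitative workhorse your sketch is missing. You correctly flag robustness to $\epsilon$ as ``where the delicate part lives,'' but that delicate part is the entire content of the randomized lower bound, and it is not reducible to the deterministic projection count.

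A second, more structural divergence: your chain rule conditions on the full prefix $(\calQ,\calW)_{<i}$, i.e.\ it exposes the actual input values on earlier coordinates. The paper deliberately avoids this. Its density function $E^{I}(R,\rho)$ conditions on $\calW_{I^c}$ but never on $\calQ_{I^c}$, and the paper explicitly remarks that one ``cannot fix'' the input because one ``has to preserve the bias.'' The bias lemma (Lemma \ref{Bias_lemma}), proved via a convexity inequality (Lemma \ref{convexityinequality}), is what guarantees that a projection on coordinate $i$ does not spoil $\gamma_{i'}$ on the remaining coordinates; your decomposition, which reveals $\calQ_{<i}$, would require its own version of this invariance and it is not clear it holds in the same form. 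In short: the averaging step and the overall telescoping are sound, but the per-coordinate extraction needs the analytic AND-gadget lemma of \cite{Gro,Jay} rather than a re-run of the monochromatic counting, and the choice of what to expose in the conditioning is not incidental bookkeeping but a carefully chosen invariant that your prefix decomposition alters.
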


We note that Theorem \ref{ic_cost} implies Theorem \ref{thm: kdisk_rcc} because the communication complexity of $C$ is lower bounded by $\ent(\calR)$, and $\ent(\calR)$ is an upper bound of $\mutualent((\calQ,\calW):\calR)$ (Fact \ref{fact: entropy}).

A similar lower bound $\mutualent(\calQ:\mathcal{R}|\calW)$ was previously obtained by the information complexity framework \cite{Gro}. We reprove it by a density increment argument. 
In what follows, we fix the protocol $C$. We first give a high-level view of our proof. 

\paragraph{Sketch of the proof.} 
We first reinterpret the proof of the deterministic lower bound (Section \ref{sec: weak_dcc}) in an entropy perspective. Then we generalize it to randomized communication lower bounds.

Let $C$ be a deterministic communication protocol for $\kdisj$. Every leaf of $C$ is a monochromatic rectangle. Let $R$ be any $0$-monochromatic rectangle (i.e., $R\cap D_{*}=\emptyset$) of $C$. Then for every input $x^{*}\in R\cap D_0$ and $i\in[n]$, 
\begin{align}
\label{inequality: 2}
\Pr_{x=(x_1\dots,x_k)\sim \calP}\left[ x_1(i)=\cdots=x_k(i)=1\mid x\in R\text{ and }\forall i'\neq i, x(i') = x^{*}(i') \right] = 0
\end{align}

since $R$ is $0$-monochromatic. Furthermore, since $R$ is a rectangle, there is a party $j$ such that 
\[
\Pr_{x=(x_1\dots,x_k)\sim \calP}\left[ x_j(i)=1\mid x\in R\text{ and }\forall i'\neq i, x(i') = x^{*}(i') \right] = 0.
\]

Recall $\calQ=(\calP\mid \calB=0)$ samples \textsf{no} instances. Thus we also have
\[
\Pr_{x=(x_1\dots,x_k)\sim \calQ}\left[ x_j(i)=1\mid x\in R\text{ and }\forall i'\neq i, x(i') = x^{*}(i') \right] = 0.
\]
By the definition of $\calR$, it is equivalent to \footnote{In the following part, we use replace the notation $x\in R$ with $\calR=R$ when $x\sim \calQ$.}
\[
\Pr_{x=(x_1\dots,x_k)\sim \calQ}\left[ x_j(i)=1\mid \calR=R\text{ and }\forall i'\neq i, x(i') = x^{*}(i') \right] = 0.
\]

If we use the entropy language and recall the definition of $\calW = (\calW_1,...,\calW_n)$, it is equivalent to
\[
\ent\left(x_{j}(i) \mid \calR=R, \calW_i=j\text{ and } \forall i'\neq i, x(i') = x^{*}(i')\right) = 0.
\]
In contrast, if we do not condition on $R$, we have that 
\[
\Pr_{x=(x_1\dots,x_k)\sim \calQ}\left[ x_j(i)=1\mid \calW_i=j\text{ and }\forall i'\neq i, x(i') = x^{*}(i') \right] = 1/2.
\]
which can be written as,
\[
\ent\left(x_{j}(i) \mid \calW_i=j\text{ and } \forall i'\neq i, x(i') = x^*(i')\right) = 1.
\]
This gap captures the mutual information of $\calR$ and $(\calQ,\calW)$ on the $i$-th coordinate.

For different choices of $x^*\in R\cap D_0$, we may have different $j$ witnessing the mutual information. But on average, we have 
\[
\mathbb{E}_{j}\left[\ent\left(x_{j}(i) \mid \calR=R, \calW_i=j\text{ and } x(1),\dots,x(i-1),x(i+1),\dots,x(n) \right) \right]\leq 1 - 1/k.
\]
In particular, there exists a $j\in[k]$ such that 
\[
\ent\left(x_{j}(i) \mid \calR=R, \calW_i=j, x(1),\dots,x(i-1),x(i+1),\dots,x(n) \right) \leq 1 - 1/k.
\]
Now we explain how can we view the projection as a decoupling process for this mutual information. We can decompose the projection in two steps:
\begin{framed}
\begin{enumerate}
    \item Fix $\calW_{i}=j$, i.e., update $(\tilde{\calQ},\tilde{\calW})\gets (\calQ,\calW\mid \calW_i=j)$.
    \item Update the density function as 
    \[
    \ent\left(\tilde{\calQ}_{[n]\setminus\{i\}}, \tilde{\calW}_{[n]\setminus\{i\}} \mid \calR=R\right) - \ent\left(\tilde{\calQ}_{[n]\setminus\{i\}}, \tilde{\calW}_{[n]\setminus\{i\}}\right),
    \]
    or equivalently
    \[
    \ent\left(\calQ_{[n]\setminus\{i\}}, \calW_{[n]\setminus\{i\}} \mid \calR=R, \calW_i=j\right) - \ent\left(\calQ_{[n]\setminus\{i\}}, \calW_{[n]\setminus\{i\}}\mid \calW_i=j\right),
    \]
    where $\calQ_{[n]\setminus\{i\}}$ (resp., $\tilde\calQ_{[n]\setminus\{i\}}$) is the marginal distribution of $\calQ$ (resp., $\tilde\calQ$) on $[n]\setminus \{i\}$.
\end{enumerate}
\end{framed}
In the first step, we pick the party $j$ that has mutual information.
In the second step, we decouple the mutual information by simply removing it from the density function. 
The projection lemma (Lemma \ref{lemma: k_u_p}) captures how this decoupling step increases the density function. 
Another crucial fact is that, for any $0$-monochromatic rectangle $R$, the distribution $(\calQ_{[n]\setminus\{i\}} \mid \calR=R, \calW_i=j)$ is also supported on $D_{0}^{[n]\setminus\{i\}}$ (see Fact \ref{fact: k_u_m}), which guarantees us to continuously increase the density by projections on different coordinates. 

Now we generalize this to the randomized communication setting where the rectangle $R$ is not necessarily monochromatic. 
By the correctness of the protocol, most rectangle $R$ is biased to either \textsf{yes} instances or \textsf{no} instances. 

For a rectangle $R$ biased to \textsf{no} instances, we expect an inequality similar to (\ref{inequality: 2}) will hold: 
For most $R\sim \calR$, most \textsf{no} instance $x^*\sim (\calQ\mid \calR=R)$ and most $i\sim[n]$, it satisfies
\[
\Pr_{x=(x_1\dots,x_k)\sim \calP}\left[ x_1(i)=\cdots=x_k(i)=1\mid x\in R\text{ and }\forall i'\neq i, x(i') = x^{*}(i') \right] \leq \delta,
\]
where $\delta$ is a small constant depending on the error rate $\epsilon$ of the protocol.

On the other hand, we also need to argue that projections can be repeated. This part is slightly more complicated than the deterministic case where we can simply fix $\calW_i=j$ for some $j\in[k]$. 
In the randomized case, we cannot fix it because we have to preserve the bias. 
This is addressed by:
\begin{itemize}
    \item Bias lemma (Lemma \ref{Bias_lemma}), a randomized variant of Fact \ref{fact: k_u_m}.
    \item Projection lemma (Lemma \ref{Projection_lemma}), a randomized variant of Lemma \ref{lemma: k_u_p}.
\end{itemize}

\subsection{Key definitions and lemmas}

Now we introduce the key definitions and lemmas (bias lemma and projection lemma) needed for the randomized communication lower bound.

\begin{definition}[$\rho$-restriction]
For $J\subset[n]$ and $w_J\in [k]^{J}$, we call $\rho=(J,w_J)$ an $\rho$-restriction, and denote $(\calQ,\calW|_\rho)$ as the distribution $(\calQ,\calW\mid \forall i\in J, \calW_i = w_i)$. 
\end{definition}

The $\rho$-restriction corresponds to projections in the deterministic case: For $\rho = (J,w_J)$, $i\in J$ corresponds to the projection $\Pi_{i,w_i}$. 
Now we define our new density function.

\begin{definition}[Density function]
Let $R$ be a rectangle and a set $I \subseteq [n]$. For a restriction $\rho = (I^{c}, w_{I^{c}})$ with $I^c=[n]\setminus I$, its density is defined by
\[
E^{I}(R,\rho) :=  \ent(\calQ_I,\calW_I \mid \rho, \calR=R) -\ent(\calQ_I,\calW_I\mid \rho).
\]
The average density is defined by
\[
E^{I}:= 
\underset{(\rho,R)\sim (\calW_{I^{c}},\calR)}{\mathbb{E}}\left[E^{I}(R,\rho)\right]
=\underset{\rho}{\mathbb{E}}\left[-\mutualent(\calQ_I,\calW_I:\calR\mid \rho)\right]
=-\mutualent(\calQ_I,\calW_I:\calR\mid \calW_{I^c}).
\]
In particular $E^{[n]}=-\mutualent(\calQ,\calW:\calR)$.
\end{definition}
The main difference between the deterministic setting and randomized setting is that, in the deterministic case, we consider $E^{I}(R,\rho)$ for some fixed $R$ and $\rho$. However, in the randomized communication case, we have to consider $E^{I}$ which does not fix $R$ and $\rho$ because the projection lemma (Lemma \ref{Projection_lemma}) and the bias lemma (Lemma \ref{Bias_lemma}) are not preserved under fixed $R$ and $\rho$.

We also note that $-E^{I}(R,\rho)$ might be negative for some $R$ and $\rho$. But $-E^{I}$ is always nonnegative because it is mutual information.

As we mentioned before, in the randomized setting, the leaves are no longer monochromatic but biased. 
Now we define the following bias definition to capture it (This is a randomized version of equation \ref{inequality: 2}).

\begin{definition}
Let $R$ be a rectangle and $I\subseteq[n]$. Let $\rho = (I^{c}, w_{I^{c}})$ be a restriction. For any $i\in I$ and input $x^{*}\in D_{0}^{I\setminus\{i\}}$, the bias of $x^{*}$ on the coordinate $i$ under $(R,\rho)$ is defined by 
\[
\gamma_{i,\rho,R}^{I}(x^{*}):=\Pr_{x=(x_1\dots,x_k)\sim \calP}\left[ x_1(i)=\cdots=x_k(i)=1 ~\middle\vert~ \rho, x\in R, x\notin \bigcup_{\ell\in I^{c}} D_{\ell} \text{ and }\forall i'\in I\setminus\{i\}, x(i') = x^{*}(i')\right],
\]
where $D_\ell\subseteq D_*$ is the \textsf{yes} instances with intersection witnessed by $x(\ell)$, i.e., $D_\ell$ is support of $(\calP|\calB=1,\ell)$.\footnote{Though $R$ is the leaf conditioned on an input from $\calQ=(\calP|\calB=0)$, it is still possible that $\calP(D_{\ell}\cap R)>0$ since the protocol is allowed to err. That is why $x\notin \bigcup_{\ell\in I^{c}} D_{\ell}$ is not implied by $\calR=R$.}
Then we define the average bias of a rectangle $R$ on $i$ as
\[
\gamma_{i,R}^{I}:= \underset{( x^{*},\rho)\sim (\calQ_{I\setminus\{i\}},  \calW_{I^{c}}\mid\calR=R)}{\mathbb{E}}\left[\gamma_{i,\rho,R}^{I}(x^{*})\right].
\]

The overall bias on $i$ is defined by
\[
\gamma_{i}^{I}:= \underset{R\sim\calR}{\mathbb{E}}\left[\gamma_{i,R}^{I}\right].
\]
\end{definition}

Finally, we define the projection for randomized communication. Recall in the deterministic case, there are two steps in the projection.  
In the randomized case, since we put $\rho$ in average, we can remove the first step. Then the projection can be defined as follows.
\begin{definition}[Projection]
Let $I \subseteq [n]$ be the set of unrestricted coordinates. For any $i\in I$, the projection on $i$ is to update the density function from $E^{I}$ to $E^{I\setminus\{i\}}$.
\end{definition}

\begin{remark}
We may use different projections for different communication problems. For example, the BPP lifting theorem \cite{goos2017query} used a very different projection because they studied low-discrepancy gadgets. 
We define the projection in such a way because we are working on AND gadgets. 
Given this flexibility, we believe the density increment arguments may provide new applications beyond the information complexity framework.
\end{remark}

Now we introduce three key lemmas in our proof. 

\begin{lemma}\label{error_bias}
Let $\epsilon\in (0,0.0001)$ be a constant. Let $C$ be a deterministic protocol with error $\epsilon$ under the distribution $\calP$. There is a constant $\delta\in(0,0.02)$ (depending only on $\epsilon$) and a set of coordinates $J\subseteq [n]$ with $|J| = \Omega(n)$ such that $\gamma_i^{[n]} \leq \delta$ holds for each $i\in J$.
\end{lemma}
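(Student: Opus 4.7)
The plan is to show $\mathbb{E}_i[\gamma_i^{[n]}]\le 4\epsilon$ and then invoke Markov's inequality. The intuition is that $\gamma_i^{[n]}$ essentially measures the probability that a random no-instance $x^{*}$ and its ``yes-perturbation'' $x^{*}_{\to}$ (obtained by replacing the $i$-th column with the all-ones vector, so that $x^{*}_{\to}\in D_{i}$) share a leaf of $C$; whenever this happens, one of the two inputs must be misclassified, so after summing over $i$ the total is controlled by the overall protocol error.

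First, I would rewrite $\gamma_i^{[n]}$ as a single expectation. Because $\calR$ samples leaves through $\calQ$-mass, a tower-rule calculation collapses the nested expectation into
\[
\gamma_i^{[n]}=\underset{x^{*}\sim\calQ}{\mathbb{E}}\left[\gamma_{i,R(x^{*})}(x^{*})\right],
\]
where $R(x^{*})$ denotes the leaf-rectangle of $C$ containing $x^{*}$. Inspecting the definition of $\gamma_{i,R}(x^{*})$ for $I=[n]$ (so the restriction $\rho$ is empty), the event $\{x_1(i)=\dots=x_k(i)=1\}$ combined with the conditioning $\{x(i')=x^{*}(i')\ \forall\,i'\neq i\}$ pins the whole string to $x^{*}_{\to}$, and since $x^{*}$ is a no-instance the only branch of $\calP$ compatible with this is $(\calB=1,\ell=i)$. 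Hence the numerator equals $\calP(x^{*}_{\to})\cdot\mathbf{1}[x^{*}_{\to}\in R]$, yielding $\gamma_{i,R}(x^{*})\le\mathbf{1}[x^{*}_{\to}\in R]$ and therefore
\[
\gamma_i^{[n]}\le\Pr_{x^{*}\sim\calQ}\left[x^{*}_{\to}\in R(x^{*})\right].
\]

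Next, I would convert this inclusion event into two protocol-error probabilities. Whenever $x^{*}_{\to}\in R(x^{*})$, the protocol assigns the same value to the no-instance $x^{*}$ and the yes-instance $x^{*}_{\to}$, so
\[
\Pr_{x^{*}\sim\calQ}\left[x^{*}_{\to}\in R(x^{*})\right]\le\Pr_{x^{*}\sim\calQ}[C(x^{*})=1]+\Pr_{x^{*}\sim\calQ}[C(x^{*}_{\to})=0].
\]
A direct check against Definition~\ref{hard_distribution} shows that when $x^{*}\sim\calQ$, the perturbation $x^{*}_{\to}$ is distributed exactly as $\calP\mid(\calB=1,\ell=i)$. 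Summing over $i\in[n]$ and using the error hypothesis $\Pr_{\calP}[C(x)\ne\kdisj(x)]\le\epsilon$---which gives $\Pr_{x\sim\calQ}[C(x)=1]\le 2\epsilon$ and $\Pr_{z\sim\calP\mid\calB=1}[C(z)=0]\le 2\epsilon$ (the latter via $\frac{1}{n}\sum_i\Pr_{z\sim\calP\mid\calB=1,\ell=i}[\cdot]=\Pr_{z\sim\calP\mid\calB=1}[\cdot]$)---yields $\sum_{i=1}^{n}\gamma_i^{[n]}\le 4\epsilon n$, so $\mathbb{E}_i[\gamma_i^{[n]}]\le 4\epsilon$.

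Finally, Markov's inequality gives $|\{i:\gamma_i^{[n]}>\delta\}|\le 4\epsilon n/\delta$. Choosing any constant $\delta$ with $4\epsilon<\delta<0.02$ (for instance $\delta=0.001$, which is valid since $\epsilon<0.0001$), the set $J:=\{i\in[n]:\gamma_i^{[n]}\le\delta\}$ has size at least $(1-4\epsilon/\delta)n=\Omega(n)$. I expect the main obstacle is the first step's unpacking: verifying that the numerator of $\gamma_{i,R}(x^{*})$ comes solely from the branch $(\calB=1,\ell=i)$---and in particular that all other branches of $\calP$ either assign a no-instance value to coordinate $i$ (so ``all ones'' fails) or force some coordinate $\ell\neq i$ to equal $(1,\dots,1)$, contradicting $x(\ell)=x^{*}(\ell)$.
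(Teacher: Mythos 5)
Your proof is correct and takes a genuinely different, more elementary route than the paper. The paper proves Lemma~\ref{error_bias} in the appendix as a byproduct of its proof of the bias lemma (Lemma~\ref{Bias_lemma}): it introduces a notion of a leaf being ``good for $i$'' (namely $\Pr_{x\sim\calP}[x\in D_i\mid x\in R,\ell=i]\le 0.01$), uses the protocol's error bound plus an averaging argument to show that for $\Omega(n)$ coordinates most leaves are good, and then invokes the convexity inequality (Lemma~\ref{convexityinequality}) to pass from the averaged ratio $\gamma_{i,R}^{[n]}=\mathbb{E}_{x^*}[\gamma^{[n]}_{i,R}(x^*)]$ to the single ratio $\Pr[x\in D_i\mid x\in R,\ell=i]$. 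Your argument bypasses the convexity machinery entirely: you observe that for $I=[n]$ the conditioning in $\gamma^{[n]}_{i,R}(x^*)$ pins $x$ to the single perturbed string $x^*_{\to}$, so the whole quantity is at most the indicator $\mathbf{1}[x^*_{\to}\in R(x^*)]$; converting co-location into two protocol-error events, recognizing $x^*_{\to}$'s law as $\calP\mid(\calB=1,\ell=i)$, and finishing with Markov gives the lemma in one clean pass. What the paper's heavier setup buys is that the same ``good rectangle'' bookkeeping is reused almost verbatim to prove the more delicate Lemma~\ref{Bias_lemma} (where the restriction $\rho$ and the event $x\notin\bigcup_{\ell\in I^c}D_\ell$ prevent the simple indicator bound from being immediate), so Lemma~\ref{error_bias} comes for free once that work is done. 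Your proof is preferable if Lemma~\ref{error_bias} is to stand alone, and it exposes the probabilistic content more transparently; the one step you flagged as a potential obstacle (that the numerator of $\gamma^{[n]}_{i,R}(x^*)$ is supported only on the branch $(\calB=1,\ell=i)$) indeed checks out, because any other branch $(\calB=1,\ell=i'\neq i)$ would force column $i'$ to be all-ones, contradicting $x(i')=x^*(i')$ for the no-instance $x^*$.
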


Since $C$ is a protocol with a small error under $\calP$ and $\calR$ is sampled according to $\calQ$ (\textsf{no} instances), we have that, for a random $R\sim\calR$, it is very likely that $R$ is biased towards to no instances. 
Then Lemma \ref{error_bias} can be proved by an average argument. This is a generalization from the deterministic case that $\gamma_{i}^{[n]}=0$ for all $i\in [n]$. The proof of Lemma \ref{error_bias} is deferred to Section \ref{sec:bias lemma} as part of the proof of Lemma \ref{Bias_lemma}.

\begin{lemma}[Projection lemma]\label{Projection_lemma}
Let $\delta\in(0,0.02)$ be a constant. For any $I\subseteq [n]$ and $i\in I$, if $\gamma_i^{I}\leq \delta$, the projection on $i$ increases the density function by $\Omega(1/k)$, i.e.,
\[
E^{I\setminus\{i\}} \geq E^{I} + \Omega(1/k).
\]
\end{lemma}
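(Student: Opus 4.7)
The plan is to expand $E^{I\setminus\{i\}}-E^I$ via the chain rule, discard a nonnegative term, and reduce everything to lower bounding a single one-bit mutual information using the bias hypothesis. Iterating the chain rule $\mutualent(A,B:C\mid D)=\mutualent(A:C\mid D)+\mutualent(B:C\mid A,D)$ on $\mutualent(\calQ_I,\calW_I:\calR\mid\calW_{I^c})$ and cancelling the matching terms against the analogous expansion of $\mutualent(\calQ_{I\setminus\{i\}},\calW_{I\setminus\{i\}}:\calR\mid\calW_{I^c\cup\{i\}})$ yields the identity
\[
E^{I\setminus\{i\}}-E^I \;=\; \mutualent(\calW_i:\calR\mid\calW_{I^c}) \;+\; \mutualent(\calQ_i:\calR\mid\calW,\calQ_{I\setminus\{i\}}).
\]
Both terms are nonnegative by Fact~\ref{fact: entropy}, so it suffices to prove $\mutualent(\calQ_i:\calR\mid\calW,\calQ_{I\setminus\{i\}})=\Omega(1/k)$.

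Second, I compute that mutual information in closed form. Under $\calQ$, given $\calW=w$ and $\calQ_{I\setminus\{i\}}=x^*$, the variable $\calQ_i=x(i)$ is determined by the single bit $x_{w_i}(i)$, which is uniform in $\{0,1\}$ a priori. The product structure of $\calQ\mid\calW$ (the parties are independent, and party $j'$'s $i$-th bit is forced to $0$ unless $w_i=j'$) together with a leaf $R=X_1\times\cdots\times X_k$ gives
\[
\Pr_\calQ[x_j(i)=1\mid R,\calW_i=j,\calW_{\ne i},x^*] \;=\; p_j \;:=\; \frac{\alpha_{j,1}}{\alpha_{j,0}+\alpha_{j,1}},
\]
where $\alpha_{j,a}:=|\{y\in X_j : y|_{I\setminus\{i\}}=x^*_j,\ y(i)=a,\ y(i')=0\ \forall i'\in I^c\setminus S_j(w_{I^c})\}|$ depends on $w$ only through $w_{I^c}=\rho$. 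Consequently $\mutualent(\calQ_i:\calR\mid\calW,\calQ_{I\setminus\{i\}})=\mathbb{E}_{(R,w,x^*)}[\KL(p_{w_i}\,\|\,1/2)]$, and a lower bound on this expected $\KL$ is exactly what is needed.

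Third, I derive a parallel closed form for $\gamma^I_{i,\rho,R}(x^*)$: expanding its definition under $\calP$ and integrating out $\calW_I$---only $\calB=0$ and $\calB=1,\ell=i$ survive the conditioning on the event $x\notin\bigcup_{\ell\in I^c}D_\ell$ combined with $x|_{I\setminus\{i\}}=x^*\in\Noinstance^{I\setminus\{i\}}$---gives
\[
\gamma^I_{i,\rho,R}(x^*) \;=\; \frac{2k\prod_{j}\alpha_{j,1}}{2k\prod_{j}\alpha_{j,1}+n\sum_{j}(\alpha_{j,0}+\alpha_{j,1})\prod_{j'\ne j}\alpha_{j',0}},
\]
which, after dividing numerator and denominator by $\prod_j(\alpha_{j,0}+\alpha_{j,1})$, is also an explicit function of the vector $(p_1,\ldots,p_k)$. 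Passing from the averaged hypothesis $\gamma^I_i\le\delta$ to a typical-pointwise bound by Markov, and then applying Pinsker's inequality $\KL(p\,\|\,1/2)\ge 2(p-1/2)^2$ to the posterior-weighted mean---with weights $\Pr[\calW_i=j\mid R,\rho,x^*]\propto(\alpha_{j,0}+\alpha_{j,1})\prod_{j'\ne j}\alpha_{j',0}$ that concentrate on exactly the parties whose $p_j$ is most biased away from $1/2$---should deliver $\mathbb{E}[\KL(p_{w_i}\,\|\,1/2)]\ge\Omega(1/k)$.

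The hard part will be this final conversion: extracting an additive $\Omega(1/k)$ lower bound on a posterior-weighted average of $\KL(p_j\,\|\,1/2)$ from a single multiplicative inequality on $\prod_j\alpha_{j,1}$ versus the denominator sum. An AM--GM argument combined with a case split on whether most $p_j$ are close to $1/2$ (in which case the weight $\prod_{j'\ne j}\alpha_{j',0}$ spreads evenly and forces the geometric mean of the $p_j$'s downward) or a few are extreme (handled directly by Pinsker) is the natural tool, but the alignment between the posterior weighting and the per-party deviations must be tracked carefully to recover the correct constant factor of $1/k$.
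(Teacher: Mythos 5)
Your decomposition of $E^{I\setminus\{i\}}-E^I$ is exactly the paper's. The paper performs the same computation in raw entropy notation: it expands $\ent(\calQ_I,\calW_I\mid\calR,\calW_{I^c})$ by the chain rule, cancels the common $\ent(\calQ_{I\setminus\{i\}},\calW_{I\setminus\{i\}}\mid\calR,\calW_i,\calW_{I^c})$ term, and drops $\ent(\calW_i)-\ent(\calW_i\mid\calR,\calW_{I^c})\ge 0$, which is your $\mutualent(\calW_i:\calR\mid\calW_{I^c})\ge 0$. (The paper's intermediate display actually carries a sign typo — it writes $+\,\ent(\calQ_i\mid\calR,\calQ_{I\setminus\{i\}},\calW)$ where it should be $-$, but the conclusion $E^{I\setminus\{i\}}-E^I\ge\ent(\calQ_i\mid\calW_i)-\ent(\calQ_i\mid\calR,\calQ_{I\setminus\{i\}},\calW)$ is correct and identical to yours, once you observe $\ent(\calQ_i\mid\calW,\calQ_{I\setminus\{i\}})=\ent(\calQ_i\mid\calW_i)$ by coordinate independence of $\calQ$ given $\calW$.)

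Where you and the paper diverge is the one-coordinate bound. The paper does not prove $\ent(\calQ_i\mid\calW_i)-\ent(\calQ_i\mid\calR,\calQ_{I\setminus\{i\}},\calW)=\Omega(1/k)$ from scratch; it cites this as Lemma~\ref{and_ic}, attributed to Gronemeier's Theorem~3.16 (and Jayram), and explicitly says the proof is omitted from this version. You instead sketch a direct proof — expressing both the mutual information and $\gamma^I_{i,\rho,R}(x^*)$ as explicit functions of the per-party ratios $p_j=\alpha_{j,1}/(\alpha_{j,0}+\alpha_{j,1})$, then invoking Markov, Pinsker, AM--GM, and a case split. This is a genuine attempt to reprove the key ingredient rather than import it, which is a legitimate and more self-contained route, but as written it is incomplete: you yourself flag that the ``final conversion'' from the multiplicative bound on $\prod_j\alpha_{j,1}$ to an additive $\Omega(1/k)$ bound on the posterior-weighted average of $\KL(p_j\,\|\,1/2)$ is open. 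That conversion is exactly the nontrivial content of the Gronemeier/Jayram argument; naively applying Pinsker and averaging tends to lose a factor of $k$ (giving $\Omega(1/k^2)$) unless the posterior weights $(\alpha_{j,0}+\alpha_{j,1})\prod_{j'\ne j}\alpha_{j',0}$ are shown to align with the most biased $p_j$'s, and that alignment is where the careful work lives. You should also sanity-check your closed form for $\gamma^I_{i,\rho,R}(x^*)$ against a small example (say $k=2$, $|I|=1$): the event conditioned on is $x\notin\bigcup_{\ell\in I^c}D_\ell$, not $x\in D_0$, so contributions from $\calB=1$, $\ell\in I\setminus\{i\}$ must also be ruled out by the constraint $x|_{I\setminus\{i\}}=x^*\in D_0^{I\setminus\{i\}}$, and it is worth verifying that the $2k$ and $n$ prefactors survive the normalization exactly as stated.

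In short: the first half is a correct, equivalent restatement of the paper's chain-rule manipulation; the second half attempts to supply a proof that the paper delegates to prior work, and it is left as a plan rather than a proof precisely at the step that is genuinely hard.
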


The projection lemma shows that the density function increases if we do a projection on a biased coordinate. We prove it in Section \ref{proof_projection_lemma}. 

Our last lemma shows that the bias is preserved during the projections as a counterpart to Lemma \ref{fact: k_u_m} in the deterministic case.

\begin{lemma}[Bias lemma]\label{Bias_lemma}
Let $\delta>0$ be the constant and $J\subseteq [n]$ be the set from Lemma \ref{error_bias}. For any $I\subseteq[n]$ and distinct $i,i'\in I\cap J$, we have that
\[
\gamma_{i'}^{I\setminus\{i\}} \leq \delta.
\]
\end{lemma}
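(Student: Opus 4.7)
The plan is to prove Lemma~\ref{Bias_lemma} together with Lemma~\ref{error_bias}, since the latter is deferred here. For Lemma~\ref{error_bias}, I would write each $\gamma_i^{[n]}$ as a ratio of two $\calP$-weights (the ``yes-at-$i$'' weight over the ``\textsf{no}-plus-yes-at-$i$'' weight arising from the conditioning) and sum over $i \in [n]$. Using that $C$ is $\epsilon$-correct on $\calP$, the total yes-weight across $0$-labeled leaves is at most $\epsilon$, while the total \textsf{no}-weight is $\Theta(1)$ since $\calP$ places probability $1/2$ on \textsf{no} instances, giving $\sum_{i\in[n]} \gamma_i^{[n]} = O(\epsilon n)$. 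A Markov-style averaging then extracts a set $J \subseteq [n]$ of size $\Omega(n)$ with $\gamma_i^{[n]} \le \delta$ for $i \in J$, for some $\delta \in (0,0.02)$ depending only on $\epsilon$.

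For Lemma~\ref{Bias_lemma}, fix $I \subseteq [n]$ and distinct $i,i' \in I \cap J$, set $I' := I \setminus \{i\}$ and $w_i := \rho'_i$. The main computation is a decomposition of the inner probability by the value $a$ of $y_i$: since the event ``yes at $i'$'' depends only on $y_{i'}$ and the meta-variables $\calB,\ell$, it is independent of $\calW_i$ once $y_{-i}$ is given, so the $\calW_i$-conditioning can be absorbed and one obtains
\[
\gamma_{i',\rho',R}^{I'}(x^{*'}) \;=\; \sum_{a \in \{0^k,\, e_{w_i}\}} q(a) \cdot \gamma_{i',\rho,R}^{I}\!\bigl((x^{*'},a)\bigr),
\]
where $q(a) := \Pr[y_i = a \mid F_{I'}]$ and $\rho := \rho'_{I^c}$. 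The conditioning $F_{I'}$ forces $y$ to be either a \textsf{no} instance or a yes-instance with witness $i'$ (all other yes-cases are killed by the pinning $y_{I' \setminus \{i'\}} = x^{*'}$, which has no collisions, and by $y \notin \bigcup_{\ell \in I'^{c}} D_\ell$). Hence $q$ splits as $q(a) = (1-\gamma^*) p(a) + \gamma^* p_2(a)$, where $\gamma^* = \gamma_{i',\rho',R}^{I'}(x^{*'})$ is the bias itself, $p(a)$ is the $\calQ$-conditional of $y'_i$ (prior weight $1/2$, which reproduces $\mu_I$ exactly), and $p_2(a)$ is the conditional under the yes-at-$i'$ case (prior weight $1/(2n)$).

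Combining with the outer averaging over $\mu_{I'}$ yields the identity
\[
\gamma_{i'}^{I'} \;=\; \gamma_{i'}^{I} \;+\; \mathbb{E}_{\mu_{I'}}\!\Bigl[\gamma^* \cdot \bigl(\mathbb{E}_{a \sim p_2} - \mathbb{E}_{a \sim p}\bigr)\gamma_{i',\rho,R}^{I}\!\bigl((x^{*'},a)\bigr)\Bigr].
\]
Iterating from $I = [n]$ (where $\gamma_{i'}^{[n]} \le \delta$ by Lemma~\ref{error_bias}) down to $I \setminus \{i\}$ then gives $\gamma_{i'}^{I'} \le \delta$. The main obstacle is controlling the correction term uniformly: a naive triangle-inequality bound $|\mathbb{E}_{p_2} - \mathbb{E}_{p}| \le 1$ only yields the self-referential $|\gamma_{i'}^{I'} - \gamma_{i'}^{I}| \le \gamma_{i'}^{I'}$, which does not close. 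To close it, I would exploit that $\gamma_{i',\rho,R}^{I}$ is itself small on the typical outer tuple (by the same ratio-of-weights argument used for Lemma~\ref{error_bias} but applied in the $I$-setting), so the product $\gamma^* \cdot (\mathbb{E}_{p_2}-\mathbb{E}_p)\gamma_{i',\rho,R}^{I}$ is a subconstant multiple of $\delta$, leaving enough slack inside the window $(0, 0.02)$ for the bias to remain below $\delta$ after any single projection.
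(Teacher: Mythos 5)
Your plan takes a genuinely different route from the paper, and I don't believe it closes as written. The key structural issue is that you try to relate $\gamma_{i'}^{I\setminus\{i\}}$ to $\gamma_{i'}^{I}$ by an additive identity with a correction term, and then iterate the relation from $I=[n]$ downward. You yourself flag the obstacle --- the naive bound is self-referential --- and propose to fix it by arguing the correction term is ``a subconstant multiple of $\delta$.'' But the correction term is an expectation of a \emph{product} of two pointwise biases $\gamma^*\cdot(\mathbb{E}_{p_2}-\mathbb{E}_p)\gamma_{i',\rho,R}^{I}$, and the claim that $\gamma_{i',\rho,R}^{I}$ is ``small on the typical tuple'' gives you only an average, not the pointwise control needed for the cross-term; the atypical tuples can in principle carry all the mass. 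Moreover, even if you had a per-step bound like $\gamma_{i'}^{I'}\le\gamma_{i'}^{I}+c$ with small $c$, the lemma is applied $\Omega(n)$ times in the proof of Theorem~\ref{ic_cost}, so the errors would accumulate and the hypothesis $\gamma_{i'}^{I}\le\delta$ needed at each stage would eventually fail. Finally, note that the lemma as stated must hold for \emph{arbitrary} $I$, whereas your induction only reaches sets of the form $[n]\setminus K$.

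The paper's proof avoids the induction entirely. The key observation is that the restricted bias can be bounded \emph{directly} by an $I$-independent, leaf-level quantity. Writing, for each tuple $(x^*,\rho^*,j^*)$, the bias as a ratio $\gamma_{i',\rho^*_{j^*},R}^{I\setminus\{i\}}(x^*)=s/(s+r)$, and the mixing weight as $p=r/\sum r$, the convexity inequality from \cite{Yang2022SimulationMI} (their Lemma~\ref{convexityinequality}) gives
\[
\gamma_{i',R}^{I\setminus\{i\}}=\sum_{(x^*,\rho^*,j^*)} p\cdot\frac{s}{s+r}\;\le\;\frac{\sum s}{\sum(s+r)}\;=\;\Pr_{x\sim\calP}\bigl[x\in D_{i'}\mid x\in R,\ \ell=i'\bigr].
\]
The right side does not depend on $I$ at all, and is small (say $\le0.01$) for ``good'' rectangles $R$, which the protocol's $\epsilon$-correctness makes a $(1-0.01)$-fraction of $\calR$. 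Averaging over $R$ then gives $\gamma_{i'}^{I\setminus\{i\}}\le0.01+0.01\cdot(1-0.01)\le\delta$ uniformly in $I$, with no accumulation. If you want to salvage your approach, the missing idea is precisely this convexity step that collapses the $\rho$- and $x^*$-averaged bias into the leaf corruption $\Pr[x\in D_{i'}\mid R,\ell=i']$, rather than comparing the bias before and after one projection.
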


This lemma can be by a convexity inequality and its proof is deferred to the Section \ref{sec:bias lemma}. Now we summarize these three lemmas and complete the proof of Theorem \ref{ic_cost}.
\begin{itemize}
    \item Lemma \ref{error_bias} shows that, if $C$ is a communication protocol with a small error under $\calP$, then $\gamma_{i}^{[n]}$ is very small for many coordinates $i$.
    \item Projection lemma (Lemma \ref{Projection_lemma}) converts the bias on $i$ into the density increment of projection on the coordinate $i$.
    \item Bias lemma (Lemma \ref{Bias_lemma}) proves that a projection on a coordinate $i$ preserves the bias on other coordinates $i'$, which shows the projection lemma can be applied many times.
\end{itemize}

\begin{proof}[Proof of Theorem \ref{ic_cost}]
Assume $C$ has error $\epsilon\in(0,0.0001)$ under distribution $\calP$. By Lemma \ref{error_bias}, there is a constant $\delta\in(0,0.02)$ and a set of coordinates $J\subseteq [n]$ with $|J| = \Omega(n)$ such that $\gamma_i^{[n]} \leq \delta$ for every $i\in J$.

Let $I=[n]\setminus J$. Then iteratively applying Lemma \ref{Projection_lemma} and Lemma \ref{Bias_lemma} on coordinates in $J$, we have 
\[
E^{I} \geq E^{[n]} + \Omega(|J|/k) = E^{[n]} + \Omega(n/k).
\]

By the definition of the density function, we know $E^{[n]} =  -\mutualent(\calQ,\calW:\calR)$.
Since $-E^{I}$ is always non-negative, we have
\[
\mutualent(\calQ,\calW:\calR)=-E^{[n]} \geq -E^{I} + \Omega(n/k) \geq \Omega(n/k).
\]
\end{proof}

\subsection{Proof of the projection lemma}
\label{proof_projection_lemma}

Now we prove Lemma \ref{Projection_lemma}. Recall that 
\[
E^{I}= 
\underset{(\rho,R)\sim (\calW,\calR)}{\mathbb{E}}\left[\ent(\calQ_I,W_I \mid \rho, \calR=R) - \ent(\calQ_I,\calW_I\mid \rho)\right]
\]
and 
\[
\gamma_{i}^{I}= \underset{R\sim\calR}{\mathbb{E}}\left[\gamma_{i,R}^{I}\right].
\]
We aim to show that if $\gamma_i^{I}\leq \delta$ for some $\delta\in(0,0.02)$, then we have that 
\[
E^{I\setminus\{i\}} \geq E^{I} + \Omega(1/k).
\]

In our proof, we borrow a useful lemma from \cite{Gro} and \cite{Jay}. 
In \cite{Gro, Jay}, this lemma was used to analyze information cost.
\begin{lemma}[{\cite[Theorem 3.16]{Gro}}]\label{and_ic}
Let $\delta < 0.02$ be a constant and $I\subseteq[n]$.
Fix a deterministic protocol $C$.
If $\gamma^I_i \leq \delta$, then 
\[
\ent(\calQ_i| \calW_i) -  \ent(\calQ_i| \mathcal{R},\calQ_{I\setminus i},\calW)= \Omega(1/k).
\]
\end{lemma}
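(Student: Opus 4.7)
The plan is to reduce the claimed information gap to a single-bit mutual information and then invoke the Hellinger-distance machinery behind the AND-function information cost lemma of \cite{bar2004information, Gro, Jay}. First, by the chain rule for entropy and since $\calW_i$ is a component of $\calW$, I would rewrite the LHS as a conditional mutual information:
\[
\ent(\calQ_i\mid \calW_i) - \ent(\calQ_i\mid \calR,\calQ_{I\setminus i},\calW)
= \mutualent(\calQ_i : \calR,\calQ_{I\setminus i},\calW_{-i}\mid \calW_i)
= \underset{(w,x^{*})}{\mathbb{E}}\bigl[\mutualent(\calQ_i : \calR \mid \calW=w,\calQ_{I\setminus i}=x^{*})\bigr],
\]
where $\calW_{-i}$ denotes $\calW$ restricted to $[n]\setminus\{i\}$ and the average is over $(w,x^{*})\sim(\calW,\calQ_{I\setminus i})$. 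For each fixed $(w,x^{*})$ with $w_i=j$, the variable $\calQ_i$ is supported on two patterns (``all zeros at $i$'' or ``only party $j$ has a $1$ at $i$''), so under $\calQ$ it identifies with the uniform bit $\calA_i$; the task becomes lower-bounding the resulting average by $\Omega(1/k)$.

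Next, I would exploit the rectangle factorization of the leaf distribution. Given $\calW_i=j$ and a leaf $R=X_1\times\cdots\times X_k$, parties $j'\neq j$ carry $0$ at $i$ deterministically, so the conditional distributions $\tau_0,\tau_1$ of $\calR$ given $\calA_i=0$ vs.\ $\calA_i=1$ differ only along party $j$'s slice. By the standard lower bound $\mutualent(\calA_i:\calR\mid\cdot)\geq \Omega(h^{2}(\tau_0,\tau_1))$ relating mutual information to squared Hellinger distance for a uniform binary source, the task reduces to lower-bounding the \emph{average} squared Hellinger distance $\mathbb{E}[h^2(\tau_0,\tau_1)]$ by $\Omega(1/k)$.

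The final step brings in the bias hypothesis. By the rectangle factorization and Bayes' rule on $\calP$'s joint structure, the conditional probability that \emph{all} $k$ parties carry bit~$1$ at coordinate $i$ admits a product-form bound $\prod_{j'}p_{j'}$, where $p_{j'}$ is the per-party conditional probability that party $j'$'s $i$-th bit is $1$ inside the leaf. The hypothesis $\gamma_i^I\leq\delta$ thus produces, on average over $R$ and the other conditioning, a product constraint $\prod_{j'}p_{j'}\leq\delta$. Averaging uniformly over $j=\calW_i\in[k]$ injects symmetry across parties; an application of the Pythagorean inequality for Hellinger distance on product distributions together with a Taylor expansion of $h^2$ near $p=1/2$ transfers this collective product constraint into the desired $\Omega(1/k)$ lower bound on the average per-party $h^2(\tau_0,\tau_1)$, completing the argument.

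The main obstacle lies in this final quantitative step: a joint product constraint across all $k$ parties must be converted into a per-party average-case Hellinger gap of $\Omega(1/k)$. The bias, being a $k$-fold product, could in principle be concentrated in any single party, so the uniform averaging over $\calW_i\in[k]$ together with the Pythagorean identity is essential for distributing the signal across parties; this is the technical core of \cite{Gro, Jay} and is where the threshold $\delta<0.02$ enters (it ensures the product constraint is strict enough that the Taylor expansion of $h^2$ at $p=1/2$ yields a positive $\Omega(1/k)$ constant rather than degenerate).
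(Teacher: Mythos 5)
The paper does not contain a proof of this lemma: it states explicitly that ``the proof is similar [to \cite[Theorem~3.16]{Gro}] and we omit the proof of this lemma here. We will include the proof in our full version.'' So there is no in-paper argument to compare yours against; the lemma is imported from Gronemeier (and Jayram) with the caveat that it is not stated identically. Your sketch correctly outlines the information-complexity machinery that the paper is invoking: rewrite the entropy gap as a conditional mutual information of the uniform bit $\calA_i$ with the leaf $\calR$, lower-bound that by the squared Hellinger distance between the two conditional transcript distributions, and then use the rectangle factorization together with a cut-and-paste/Pythagorean inequality to transfer the bias bound $\gamma^I_i\le\delta$ into an average-over-parties Hellinger gap of $\Omega(1/k)$. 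This is the right architecture.

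Two places where the sketch is looser than the cited argument and would need to be made precise. First, your initial chain of equalities tacitly uses that $\calQ_i$ is conditionally independent of $(\calQ_{I\setminus i},\calW_{-i})$ given $\calW_i$ (so that $\ent(\calQ_i\mid\calW_i)=\ent(\calQ_i\mid\calW,\calQ_{I\setminus i})$, and so that $\mutualent(\calQ_i:\calR,\calQ_{I\setminus i},\calW_{-i}\mid\calW_i)=\mutualent(\calQ_i:\calR\mid\calW,\calQ_{I\setminus i})$). That is correct because $\calQ$ factors across coordinates and $\calW$ has independent coordinates, but it should be stated, since it is exactly the product structure that makes the per-coordinate reduction valid. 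Second, the final step is not a ``Taylor expansion of $h^2$ near $p=1/2$''; in \cite{Gro,Jay} the bias hypothesis is turned into a constant lower bound on the Hellinger distance between the transcript distributions of the all-zeros and all-ones patterns via a Z-lemma-type argument (a small-bias leaf cannot assign comparable mass to the two patterns), and only then does the Pythagorean-style inequality distribute that constant across $k$ parties to give the $\Omega(1/k)$ per-party average. These are details to fill in rather than a flaw in the approach, but they are the technical core that the lemma statement and your sketch both leave implicit.
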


Though Lemma \ref{and_ic} is not exactly the same as \cite{Gro,Jay}, the proof is similar and we omit the proof of this lemma here.
We will include the proof in our full version.

\begin{proof}[Proof of Lemma \ref{Projection_lemma}]
Recall 
\[
E^{I} = \ent(\calQ_I,\calW_I \mid \calW_{I^c},\calR) - \ent(\calQ_I,\calW_I\mid \calW_{I^{c}}).
\]

Since $W_{I^c}$ is independent with $(\calQ_I,W_I)$, we have
\[
E^{I} = \ent(\calQ_I,\calW_I \mid \calW_{I^c},\calR) - \ent(\calQ_I,\calW_I).
\]

Similarly $\ent(\calQ_I,\calW_I) - \ent(\calQ_{I\setminus i},\calW_{I\setminus \{i\}}) = \ent(\calQ_i,W_i)$ since $(\calQ_i,\calW_i)$ and $(\calQ_{I\setminus \{i\}},\calW_{I\setminus i})$ are independent. Hence,
\[
E^{I\setminus \{i\}} - E^{I} = \ent(\calQ_i, \calW_i) - \ent(\calQ_I,\calW_I \mid \calW_{I^c},\calR) +   \ent(\calQ_{I\setminus \{i\}},\calW_{I\setminus \{i\}} \mid \calW_{I^c},\calW_i,\calR).
\]

Applying chain rule of entropy on $\ent(\calQ_I,\calW_I|\mathcal{R},\calW_{I^c})$, i.e., 
\[
\ent(\calQ_I,\calW_I|\mathcal{R},\calW_{I^c}) = \ent(\calW_i|\mathcal{R},\calW_{I^c}) + \ent(\calQ_{I\setminus \{i\}},\calW_{I\setminus \{i\}}|\mathcal{R},\calW_i,\calW_{I^c}) + \ent(\calQ_i|\mathcal{R},\calQ_{I\setminus \{i\}},\calW),
\]
we have
\[
E^{I\setminus i} -E^{I} = \ent(\calQ_i,\calW_i) -  \ent(\calW_i|\mathcal{R},\calW_{I^c}) + \ent(\calQ_i|\mathcal{R},\calQ_{I\setminus \{i\}},\calW).
\]

By the chain rule $\ent(\calQ_i,\calW_i) = \ent(\calW_i) + \ent(\calQ_i|\calW_i)$ and the fact that $\ent(\calW_i) \geq \ent(\calW_i|\mathcal{R},\calW_{I^c})$, we conclude that,
\[
E^{I\setminus \{i\}} - E^{I} \geq \ent(\calQ_i|\calW_i) - \ent(\calQ_i|\mathcal{R},\calQ_{I\setminus \{i\}},\calW).
\]

Finally, by Lemma \ref{and_ic} and the fact that $\gamma^I_i \leq \delta<0.02$, we have 
\[
E^{I\setminus \{i\}}- E^{I}\ge\Omega(1/k).
\]
\end{proof}

\section{Deterministic lower bounds for sparse unique-disjointness}
\label{sec: sparse_DCC}

In this section, We discuss the sparse unique-disjointness problem.

\begin{definition}
For each $s\geq 2$ and $n\geq 1$, the $s$-UDISJ problem is defined as follows:
\begin{itemize}
    \item \textsf{No} instances: $D_{0}^{(s)}:=\{(x,y): |x|, |y| \leq s\text{ and }\forall i, x(i) + y(i)\leq 1 \}$.
    \item \textsf{Yes} instances: $D_{*}^{(s)}:=\{(x,y): |x|, |y| \leq s \text{ and }\exists \ell, x(\ell) = y(\ell)=1\text{ and } \forall i\neq \ell, x(i) + y(i)\leq 1 \}$.
\end{itemize}
Here $|x|$ is the Hamming weight of $x$.
\end{definition}

Theorem \ref{thm: sdisk_dcc} aims to show that any deterministic communication protocol for $s$-UDISJ requires $\Omega(s\cdot\log(n/s))$ communication bits. 
To prove this theorem, we consider the following Unique-Equality problem \cite{ST13,LMequality}.
\begin{definition}
Let $s\geq 2$ and $n\geq 1$ be integers. Let $B$ be a set with $(n/s)$ elements. The $s$-UEQUAL problem is defined as follows:
\begin{itemize}
    \item \textsf{No} instances: $B_{0}^{(s)}:=\{(x,y)\in B^s\times B^s:  \forall i \in [s], x_i \neq y_i\}$.
    \item \textsf{Yes} instances: $B_{*}^{(s)}:=\{(x,y)\in B^s\times B^s: \exists \ell, x_{\ell} = y_{\ell} \text{ and } \forall i\in [s]\setminus \{\ell\}, x_i \neq y_i \}$.
\end{itemize}
\end{definition}
There is a simple reduction from $s$-UEQUAL to $s$-UDISJ \cite{ST13}. Hence it is sufficient to prove a communication lower bound for $s$-UEQUAL. In Theorem \ref{thm: sdisk_dcc}, we focus on the regime that $s\leq n^{1/2 - \epsilon}$ for any small constant $\epsilon>0.$ Now our goal is to prove the communication complexity of $s$-UEQUAL is $\Omega(s\cdot\log(n/s)) = \Omega(s\cdot\log n)$.

We borrow the square idea from \cite{LMequality} but revise and simplify it as we do not need to fully simulate the protocol. 
See Section \ref{sec: comparison} for discussions.
\begin{definition}[Square]
Let $R=X\times Y\subseteq B^{s}\times B^{s}$ be a rectangle. A square in $R$ contains a set $I\subseteq [s]$, a set $S\subseteq B^{I}$, and for every $i\in [s]\setminus I$, there is a set $A_i$. 
We denote the family of these $A_i$'s as $\mathcal{A}$.

Given $(I,S,\calA)$, we say it is a square in $R=X\times Y$ if, for every $z\in S$, there exists some $x\in X$ and $y\in Y$ such that: 
\begin{itemize}
    \item $x|_{I}=z$ and, for all $i\in [s]\setminus I$, $x_{i}\in A_{i}$;
    \item $y|_{I}=z$ and, for all $i\in [s]\setminus I$, $y_{i}\in B\setminus A_{i}$.
\end{itemize}
\end{definition}
Same as in previous sections, we use the set $I$ to denote unrestricted coordinates and use $[s]\setminus I$ to denote fixed coordinates. 
We remark that the definition above enforces that $x_{i}\neq y_i$ (as $x_i\in A_i, y_i\in B\setminus A_i$) for all $i\in[s]\setminus I$. Hence, the fixed coordinates do not reveal any information about whether it is a \textsf{yes} instance or a \textsf{no} instance.

Similar to Raz-McKenzie simulation, we also have a notion of thickness in the proof.
\begin{definition}[Thickness]
A set $S\subseteq B^{I}$ is $r$-thick if, it is not empty and for every $i\in I$ and $x\in S$, we have that
\[
|\{x'\in S: \forall j\neq i, x_{j}=x_{j}'\}|\geq r.
\]
We say that a square $(I,S,\calA)$ is $r$-thick if the set $S$ is $r$-thick.
\end{definition}
In our proof, we always choose \textit{$r=10\cdot \log n$}, and sometimes abbreviate $r$-thick as thick. 
The following thickness to full-range lemma is a standard fact in query-to-communication simulations.

\begin{lemma}
\label{lemma: full-range}
Let $S\subseteq B^{I}$ be a thick set. Then for every $z\in\{0,1\}^{I}$, there is a pair $x,y\in S$ such that
\[
\forall i\in I,~ z_i=1 ~~~\text{ iff }~~~ x_i=y_i.
\]
\end{lemma}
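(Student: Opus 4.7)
My plan is to prove the lemma by iteratively modifying a single element of $S$, using thickness to change one coordinate at a time. Let $I_1 := \{i \in I : z_i = 1\}$ and $I_0 := \{i \in I : z_i = 0\}$, and enumerate the latter as $I_0 = \{i_1, i_2, \ldots, i_k\}$. The goal is to find $x, y \in S$ that agree on $I_1$ and disagree on every coordinate in $I_0$.

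The plan is to fix any $x \in S$ (which exists since $S$ is nonempty by thickness) and construct a sequence $y^{(0)}, y^{(1)}, \ldots, y^{(k)}$ of elements in $S$, maintaining the invariant that $y^{(t)}_j = x_j$ for all $j \in I_1 \cup \{i_{t+1}, \ldots, i_k\}$ and $y^{(t)}_{i_s} \neq x_{i_s}$ for all $s \leq t$. We start with $y^{(0)} := x$. At step $t$, we apply $r$-thickness to $y^{(t-1)}$ at coordinate $i_t$: the fiber
\[
F_t := \{z \in S : z_j = y^{(t-1)}_j \text{ for all } j \neq i_t\}
\]
has size at least $r \geq 2$, so it contains some element $y^{(t)} \neq y^{(t-1)}$. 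This element agrees with $y^{(t-1)}$ on every coordinate except $i_t$ and strictly differs from it at $i_t$. Setting $y := y^{(k)}$ preserves the invariant, and the pair $(x, y)$ has the desired equality pattern: $x_i = y_i$ for $i \in I_1$ and $x_i \neq y_i$ for $i \in I_0$.

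There is essentially no obstacle here beyond keeping track of the invariant: the only thing that could go wrong would be a step at which the fiber $F_t$ contains only $y^{(t-1)}$, but $r$-thickness with $r = 10\log n \geq 2$ rules this out immediately. I note that the full strength $r = 10\log n$ is overkill for this lemma (any $r \geq 2$ suffices); the logarithmic thickness is presumably consumed elsewhere in the simulation argument, for instance when restoring thickness after restricting coordinates to maintain squares. I would therefore present the proof in full generality, noting explicitly that only $r \geq 2$ is used, which clarifies where the quantitative thickness budget is actually spent in the rest of Section~\ref{sec: sparse_DCC}.
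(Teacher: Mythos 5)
Your proof is correct. The iterative argument — fixing $x$, then walking one coordinate of $I_0$ at a time, using the fiber $F_t$ having size at least $r\geq 2$ to swap out the $i_t$-th coordinate while leaving all others untouched — is the standard way to prove full-range lemmas for the equality gadget, and the invariant you maintain ($y^{(t)}$ agrees with $x$ on $I_1$ and on the not-yet-processed part of $I_0$, disagrees on the processed part) is exactly what is needed. The paper itself defers this proof to a full version, so there is nothing to compare against directly, but your argument is the expected one. Your meta-remark is also accurate: only $r\ge 2$ is consumed here, and the quantitative budget $r = 10\log n$ is spent in Lemma~\ref{lemma: sparseprojection}, where the Chernoff-plus-union-bound step needs each fiber to have polynomially many elements so that a random half-split $A_i$ hits both $A_i$ and $B\setminus A_i$ within every fiber with high probability. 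Making that explicit in a write-up would indeed clarify where the thickness is actually used.
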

The proof of this lemma will be included in the full version. 
As a byproduct of this lemma, we have the following corollary.
\begin{corollary}\label{lemma:monochromatic}
Let $R$ be a rectangle containing a square $(I,S,\calA)$ such that $I\neq\emptyset$ and $S$ is thick, then $R$ is not monochromatic.
\end{corollary}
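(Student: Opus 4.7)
The plan is to deduce the corollary directly from Lemma~\ref{lemma: full-range} by producing both a \textsf{yes} instance and a \textsf{no} instance of $s$-UEQUAL that lie inside $R$. Recall that the square structure guarantees that for every $z\in S$ we can pick $x,y\in X\times Y$ which agree with $z$ on the unrestricted coordinates $I$, while on the fixed coordinates $i\in[s]\setminus I$ we have $x_i\in A_i$ and $y_i\in B\setminus A_i$, so $x_i\neq y_i$ automatically. Hence whether the resulting pair is a \textsf{yes} or a \textsf{no} instance is determined entirely by the agreement pattern of $x|_I$ and $y|_I$.

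First, I would construct the \textsf{no} instance. Apply Lemma~\ref{lemma: full-range} to the thick set $S$ with the pattern $z=0^I\in\{0,1\}^I$. This gives $x',y'\in S$ with $x'_i\neq y'_i$ for all $i\in I$. Lifting $x'$ and $y'$ to the corresponding elements of $X$ and $Y$ via the definition of the square, we obtain $(x,y)\in R$ satisfying $x_i\neq y_i$ on \emph{every} coordinate $i\in[s]$, so $(x,y)\in B_0^{(s)}$.

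Second, I would construct the \textsf{yes} instance. Since $I\neq\emptyset$, fix any $i_0\in I$ and apply Lemma~\ref{lemma: full-range} to the indicator pattern $z=\mathbf{1}_{\{i_0\}}\in\{0,1\}^I$. This yields $x',y'\in S$ with $x'_{i_0}=y'_{i_0}$ and $x'_i\neq y'_i$ for every $i\in I\setminus\{i_0\}$. Again lifting to $(x,y)\in R$ through the square, we find that $i_0$ is the unique coordinate on which $x$ and $y$ agree, so $(x,y)\in B_*^{(s)}$.

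Since $R$ contains both a \textsf{no} instance and a \textsf{yes} instance, it cannot be monochromatic, which is the desired conclusion. The only nontrivial ingredient is Lemma~\ref{lemma: full-range}; everything else is bookkeeping that follows immediately from the definition of a square, so there is no substantive obstacle beyond being careful about indexing the coordinates in $I$ versus $[s]\setminus I$.
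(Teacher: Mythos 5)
Your proof is correct and matches the approach the paper intends: it explicitly describes Corollary~\ref{lemma:monochromatic} as a direct byproduct of Lemma~\ref{lemma: full-range}, and the later discussion in Section~\ref{sec: comparison} even singles out exactly the two patterns you use, $z=0^I$ and the indicator vectors $e_i$, as the only ones needed for $s$-UEQUAL. The lifting step is handled properly -- you apply the square definition once with $z=x'$ to pick $x\in X$ and once with $z=y'$ to pick $y\in Y$, then invoke rectangularity to get $(x,y)\in R$ with a prescribed agreement pattern on $I$ and forced disagreement on $[s]\setminus I$.
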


\begin{definition}[Average degree]
Let $S\subseteq B^{I}$. For each $i\in I$, we define the set $S_{-i}\subseteq B^{I\setminus\{i\}}$ as
\[
S_{-i}:=\{x'\in B^{I\setminus\{i\}}:\exists x\in S, x|_{I\setminus\{i\}}=x'\}.
\]
We say that the average degree of $S$ is $\alpha$ if $|S|\geq \alpha \cdot |S_{-i}|$ holds for all $i$.
We say that a square $(I,S,\calA)$ has an average degree $\alpha$ if the average degree of $S$ is $\alpha$.
\end{definition}

Regards to the average degree, we have a simple but useful fact.
\begin{fact}
\label{fact: trans}
For $\alpha,\beta>0$. Let $S$ be a set that has average degree $\alpha$. Then for any subset $S'\subseteq S$ of size $|S'|\geq \beta\cdot|S|$, $S'$ has average degree $\alpha\cdot\beta$.
\end{fact}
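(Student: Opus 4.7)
The plan is to unpack the definition of average degree and chain two inequalities. To show that $S'$ has average degree $\alpha\beta$, I need to verify that $|S'|\geq\alpha\beta\cdot|S'_{-i}|$ for every $i\in I$, which is equivalent to $|S'_{-i}|\leq|S'|/(\alpha\beta)$.

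First, I would observe that the coordinate-drop operation $T\mapsto T_{-i}$ is monotone under inclusion: if $T\subseteq T''$, then $T_{-i}\subseteq T''_{-i}$, because every restriction of an element of $T$ is automatically a restriction of an element of $T''$. Applied to $S'\subseteq S$, this yields $|S'_{-i}|\leq|S_{-i}|$. Next, I would invoke the two given hypotheses: the average-degree assumption on $S$ gives $|S_{-i}|\leq|S|/\alpha$, and the size assumption $|S'|\geq\beta|S|$ rearranges to $|S|\leq|S'|/\beta$. Chaining these three bounds produces $|S'_{-i}|\leq|S_{-i}|\leq|S|/\alpha\leq|S'|/(\alpha\beta)$, which is exactly the required inequality. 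Since $i\in I$ was arbitrary, $S'$ has average degree at least $\alpha\beta$.

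There is no real obstacle here; the proof is essentially a two-line chain of inequalities once monotonicity of the projection is noted. The only bookkeeping point worth care is the direction of the hypothesis $|S'|\geq\beta|S|$: it bounds $|S|$ from above by $|S'|/\beta$, which is precisely what makes the factor $\alpha\beta$ (rather than, say, $\alpha/\beta$) emerge in the final bound.
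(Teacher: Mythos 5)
Your proof is correct; it is the natural two-line chain of inequalities, using monotonicity of $T\mapsto T_{-i}$ together with the two hypotheses. The paper states this as a fact without giving a proof, and your argument is exactly the one the authors intend the reader to supply.
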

A crucial component in Raz-McKenzie simulation connecting thickness and the average degree is the thickness lemma. In our proof, we borrow a version from \cite{LMequality}.

\begin{lemma}[Thickness lemma \cite{LMequality}]
\label{lemma: sparsethick}
Let $\alpha,\delta>0$ be parameters. Let $\emptyset\neq I\subseteq [s]$ and $S \subseteq B^{I}$. If $S$ has average degree $\alpha$, then there is a $(\delta\cdot\alpha/s)$-thick set $S'\subseteq S$ of size $|S'|\geq (1-\delta)\cdot |S|$.
\end{lemma}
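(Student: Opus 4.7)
The plan is to construct $S'$ by iteratively pruning ``bad'' elements from $S$ -- those that sit in a small fiber along some coordinate -- until every remaining fiber is large enough. Set the threshold $\tau := \delta\alpha/s$. I would run the following greedy process: as long as there exist $i \in I$ and $x \in S_t$ with
\[
\bigl|\{x' \in S_t : x'|_{I\setminus\{i\}} = x|_{I\setminus\{i\}}\}\bigr| < \tau,
\]
remove $x$ from $S_t$ and assign it the ``reason'' $i$; when no such pair exists, stop and declare $S' := S_t$. By construction, $S'$ is $\tau$-thick. So the whole task reduces to bounding the total number of removals by $\delta |S|$.

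For the counting, first note that for each coordinate $i$, the projection set $S_{-i}$ has size at most $|S|/\alpha$ by the average-degree hypothesis. I would partition the removed set $B = S \setminus S'$ by reason, writing $B = \bigsqcup_{i \in I} B_i$, and then further decompose each $B_i$ along the fibers $F_y := \{x \in S : x|_{I\setminus\{i\}} = y\}$ for $y \in S_{-i}$. The key monotonicity observation is that since we only remove elements, once a fiber $F_y \cap S_t$ shrinks below $\tau$, it stays below $\tau$ forever. Consequently, the total number of elements ever removed from $F_y$ with reason $i$ is at most the size of $F_y \cap S_t$ at the moment it first dropped below $\tau$, which is at most $\tau$. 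Summing over all projections,
\[
|B_i| = \sum_{y \in S_{-i}} |B_i \cap F_y| \leq \tau \cdot |S_{-i}| \leq \tau \cdot |S|/\alpha,
\]
and then summing over the at most $s$ coordinates in $I$ yields
\[
|B| \leq s \cdot \tau \cdot |S|/\alpha = s \cdot (\delta\alpha/s) \cdot |S|/\alpha = \delta |S|,
\]
so $|S'| \geq (1-\delta)|S|$ as required.

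The only point that needs genuine care is the per-fiber bound: one has to argue that even though elements of $F_y$ may be removed across many iterations (interleaved with removals for other coordinates), the fiber-restricted count with reason $i$ never exceeds $\tau$. This is handled by the monotonicity remark above, together with the convention that each removed element is charged to exactly one reason (for instance, the smallest-index coordinate $i$ witnessing the violation at the time of removal). With that fixed, the charging $|B_i \cap F_y| \leq \tau$ holds for every $y$, and the rest is bookkeeping. Note also that the lemma is vacuous unless $\alpha \geq \tau$, i.e., unless the conclusion is nontrivial -- and in the intended application $\alpha$ is comparable to $n/s$, so $\tau = \delta \alpha/s$ is well above the $10 \log n$ threshold used elsewhere in the paper.
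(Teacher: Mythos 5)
Your proof is correct and is essentially the standard greedy-pruning-with-charging argument used for thickness lemmas in the Raz--McKenzie tradition; the paper itself cites the lemma from \cite{LMequality} without giving a proof, and your argument is the natural one that reference uses. The key step -- fixing a deterministic reason (say, smallest violating coordinate) so that each removal is charged to exactly one pair $(i,y)$, then observing that reason-$i$ removals from a fiber can only begin once that fiber has already shrunk below $\tau$ and hence contribute fewer than $\tau$ removals per fiber -- is exactly right, and the bound $|S_{-i}|\le|S|/\alpha$ from the average-degree hypothesis together with $|I|\le s$ closes the count. Two small cosmetic points: you reuse the symbol $B$ for the removed set, which collides with the paper's alphabet $B$ (with $S\subseteq B^I$), so rename the removed set; and your closing remark that ``the lemma is vacuous unless $\alpha\ge\tau$'' is a bit off, since $\tau=\delta\alpha/s\le\alpha$ automatically whenever $\delta\le1\le s$ -- the genuine edge case is $\delta\ge1$, where $(1-\delta)|S|\le0$ and $S'$ could be empty, failing the non-emptiness clause in the paper's definition of thickness; for the paper's use ($\delta=1/2$) this never arises.
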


We also fix \textit{$\delta = 1/2$} and \textit{$\alpha = \sqrt{n}$} together with $r=10\cdot\log n$. Recall that $s\leq n^{1/2-\epsilon}$ for some $\epsilon>0$. In this regime of parameters, we have that $\delta\cdot\alpha/s \geq n^{\epsilon}/2 \geq 10\log n = r$. Hence, as long as we maintain a square with average degree $\alpha$. we are able to apply the thickness lemma.

\begin{lemma}[Projection lemma]
\label{lemma: sparseprojection}
Let $R=X\times Y$ be a rectangle and let $Q=(I,S,\calA)$ be a thick square in $R$. If the set $S$ has size more than $(3\alpha)^{|I|}$,  then there is a square $Q'=(I',S',\calA')$ in $R$ such that
\begin{itemize}
    \item $I'\subseteq I$ and $I'\neq\emptyset$,
    \item $S'$ has average degree $2 \alpha$,
    \item $|S'|\geq 0.9\cdot (3\alpha)^{|I'|-|I|}\cdot|S|$.
\end{itemize}
\end{lemma}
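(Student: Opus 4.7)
I would prove the projection lemma by iteratively projecting out low-degree coordinates, one at a time, starting from $(I, S, \calA)$. The invariant maintained during iteration is that the working triple $(\tilde I, \tilde S, \tilde \calA)$ is a valid square in $R$ with $|\tilde S| \ge 0.9 \cdot (3\alpha)^{|\tilde I| - |I|} \cdot |S|$. If the current $\tilde S$ already has average degree at least $2\alpha$ on every coordinate of $\tilde I$, it becomes the output; otherwise I pick some $i \in \tilde I$ with $|\tilde S| < 2\alpha \cdot |\tilde S_{-i}|$ and perform a single projection step to remove $i$ from $\tilde I$.

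\textbf{The projection step.} For each $z' \in \tilde S_{-i}$, write $F(z') := \{v \in B : (z', v) \in \tilde S\}$ for its fiber. The key observation is that, because $R$ is a rectangle, extending $\tilde \calA$ by some $A_i \subseteq B$ produces a valid square on $\tilde I \setminus \{i\}$ precisely for those $z' \in \tilde S_{-i}$ whose fiber $F(z')$ is \emph{split} by $A_i$, i.e., meets both $A_i$ and $B \setminus A_i$. Indeed, given a split, values $v_1 \in F(z') \cap A_i$ and $v_2 \in F(z') \cap (B \setminus A_i)$ yield $x^{(z',v_1)} \in X$ and $y^{(z',v_2)} \in Y$ from the old square, and the rectangle structure of $R$ glues them into $(x^{(z',v_1)}, y^{(z',v_2)}) \in R$, which witnesses the new square property at $z'$. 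Singleton fibers can never be split, so I first discard every $z \in \tilde S$ whose fiber on coordinate $i$ has size $1$; the degree hypothesis $|\tilde S_{-i}| > |\tilde S|/(2\alpha)$ bounds this loss. I then pick $A_i$ by the probabilistic method, including each $v \in B$ in $A_i$ independently with probability $1/2$: each fiber of size $k \ge 2$ is split with probability $1 - 2^{1-k} \ge 1/2$, so by linearity of expectation some deterministic choice of $A_i$ splits at least half of the remaining fibers, yielding a valid new square on $\tilde I \setminus \{i\}$.

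\textbf{Accounting.} The hypothesis $|S| > (3\alpha)^{|I|}$ and the invariant guarantee $|\tilde S| > (3\alpha)^{|\tilde I|}$ throughout, so the iteration can keep proceeding until the average-degree condition holds on every coordinate in $\tilde I$. Multiplying the per-step loss factors and unfolding the telescoping product then yields $|S'| \ge 0.9 \cdot (3\alpha)^{|I'|-|I|} \cdot |S|$. Non-emptiness of $I'$ follows because the whole process terminates at the first moment every remaining coordinate has average degree $\ge 2\alpha$, and the size threshold prevents $\tilde I$ from becoming empty prematurely.

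\textbf{Main obstacle.} The delicate part will be the tight per-step accounting. The naive combination of the $1/(2\alpha)$ factor from discarding singleton fibers and the $1/2$ factor from random splitting gives roughly $1/(4\alpha)$ per step, which is slightly worse than the $1/(3\alpha)$ the final bound allows. I expect to close this gap either by exploiting that large fibers are split with probability much greater than $1/2$ (so the size-weighted gain beats $1/2$ whenever the fiber-size distribution is not concentrated on exactly size $2$), or by choosing $A_i$ more cleverly than uniformly at random, using the condition $|S| > (3\alpha)^{|I|}$ to rule out worst-case fiber configurations. This interplay between the size assumption and the fiber structure is where the real work will lie.
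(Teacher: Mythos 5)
Your plan diverges from the paper's proof in one structurally important place, and that is exactly where the gap you flag at the end lives: the paper does \emph{not} interleave removing a coordinate with choosing its $A_i$. Instead it runs a two-phase argument. Phase one is a pure degree-shrinking loop: while some $i\in I'$ has $|\tilde S|\le 3\alpha\cdot|\tilde S_{-i}|$, replace $\tilde S$ by $\tilde S_{-i}$ and drop $i$ from $I'$, \emph{without} yet touching $\calA$. This loses only the factor $(3\alpha)^{-1}$ per removed coordinate and nothing else, and it also gives $I'\neq\emptyset$ and average degree $\ge 3\alpha$ for free when the loop stops. Phase two then chooses \emph{all} the sets $A_i$ for $i\in I\setminus I'$ at once, by independent $1/2$-coins, and shows that for each fixed $z\in\tilde S$ the probability that $z$ survives into $S'$ is $1-O(1/n)$. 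This is where the thickness hypothesis (which your proposal never uses — a warning sign, since it is a stated assumption of the lemma) earns its keep: because $S$ is $(10\log n)$-thick, the relevant fibers all have size at least $10\log n$, so each $A_i$ splits them except with probability $n^{-10}$, and a union bound over at most $s\le n$ coordinates gives the $1-O(1/n)$ survival probability. The loss from the random choice of $\calA'$ is thus a single multiplicative factor $1-O(1/n)$ applied once, not a factor of $1/2$ applied at every step.

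This matters because the per-step $1/2$ you incur from ``split at least half the fibers'' is a genuine obstruction, not a bookkeeping inconvenience. You correctly note that your naive count gives roughly $1/(4\alpha)$ per step against a budget of $(3\alpha)^{-1}$ per step times a one-time $0.9$; but the defect is worse than ``slightly worse.'' After $t=|I|-|I'|$ steps your product bound is $(3/4)^t\cdot(3\alpha)^{-t}|S|$, which drops below the required $0.9\cdot(3\alpha)^{-t}|S|$ already for $t=1$, and decays geometrically thereafter. The two rescue ideas you sketch (large fibers split with probability $\gg 1/2$; exploit $|S|>(3\alpha)^{|I|}$ to constrain fiber shapes) will not close this by themselves, because once you project one coordinate and prune, the working set $\tilde S$ is no longer thick, so in a later step you may genuinely face fibers of size exactly $2$, for which the $1/2$ loss is tight. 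The paper's reordering — batch all the coordinate removals first, choose all the $A_i$'s last, and pay for them simultaneously with one Chernoff + union bound against the \emph{original} thick $S$ — is precisely what sidesteps this compounding. So the route you propose is not a harmless variant; to make it work you would essentially have to re-derive the batching idea, since the one-at-a-time interleaving discards the thickness you need for the concentration step.
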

\begin{proof}[Proof sketch]
We prove this lemma by a standard structure-vs-pseudorandomness approach. We first describe the process (Algorithm \ref{alg:finding}) to find the set $I'$ and $\tilde{S}$.
\begin{algorithm}
\caption{\textsc{Finding a set $I'$}}
\label{alg:finding}
\begin{algorithmic}[1]
\STATE \textbf{Input:} A set $I\subseteq [s]$ and a set $S\subseteq B^{I}$ of size $|S|\geq (3\alpha)^{|I|}$
\STATE Let $I'\gets I$ and $\tilde{S}\gets S$
\STATE Let $t\gets 0$
\WHILE{$\exists i\in I', |\tilde{S}|\leq 3\alpha\cdot |\tilde{S}_{-i}|$}
\STATE $I'\gets I'\setminus\{i\}$
\STATE $\tilde{S}\gets \tilde{S}_{-i}$
\STATE $t\gets t+1$
\STATE $i_{t}\gets i$ and $S_{t}\gets \tilde{S}$
\ENDWHILE
\RETURN $(I',\tilde{S})$ and $L=(i_1,\dots,i_{t})$
\end{algorithmic}
\end{algorithm}

We note that the average degree of $\tilde{S}$ is at least $(3\alpha)$, otherwise the algorithm would not stop. Following this algorithm, it is also clear that $|\tilde{S}|\geq |S|\cdot(3\alpha)^{|I'|-|I|}$. 
This implies that $I'\neq\emptyset$ because $|S|>(3\alpha)^{|I|}$.

Now, for each $i\in I\setminus I'$, we randomly pick a set $A_{i}\subseteq B$ by independently including each element with probability $1/2$. 
Let $\calA'=\calA\cup \{A_{i}:i\in I\setminus I'\}$ and $S'\subseteq \tilde{S}$ be those strings $z\in\tilde{S}$ such that, there exists an input $x\in X$ and $y\in Y$ such that:
\begin{itemize}
    \item $x|_{I'}=z$ and, for all $i\in [s]\setminus I'$, $x_{i}\in A_{i}$.
    \item $y|_{I'}=z$ and, for all $i\in [s]\setminus I'$, $y_{i}\in B\setminus A_{i}$.
\end{itemize} 

We show that with high probability, the square $(I',S',\calA')$ is a witness for this lemma. We already argued that $|\tilde{S}|\geq |S|\cdot(3\alpha)^{|I'|-|I|},$ now we show for every $z\in\tilde{S}$,
\[
\Pr_{\{A_i\}_{i\in I\setminus I'}}[z\in S'] \geq 1 -O(1/n).
\]
This inequality uses the fact that $S$ is $(10\log n)$-thick, and then a Chernoff bound on each $A_i$, and a union bound on all $i\in I\setminus I'$.
We omit the details here and will include them in the full version. 

Once it is established, then by an average argument, there is a choice of $\{A_{i_{1}},\dots,A_{i_{t}}\}$ such that
\begin{itemize}
    \item $|S'|\geq (1-O(1/n))\cdot |\tilde{S}|\geq 0.9\cdot|S|\cdot (3\alpha)^{|I'|-|I|}$;
    \item $S'$ has average degree $2\alpha$, by Fact \ref{fact: trans} and $\tilde{S}$ having average $3\alpha$ and $|S'|\geq 0.9\cdot |\tilde{S}|$.
\end{itemize}
\end{proof}

Now we are ready to explain how to find a long path in the communication tree.

\subsection{Finding a long path in a communication tree}

Before presenting our algorithm, we first fix some notations.

\begin{definition}
Let $Q=(I,S,\calA)$ be a square in a rectangle $R$. For any sub-rectangle $R'=X'\times Y'$ of $R$, sub-square $Q|_{R'}=(I',S',\calA')$ is defined as follows:
\begin{itemize}
    \item Keep $I'=I$ and $\calA' = \calA$ the same.
    \item $S'\subseteq S$ contains all of those $z\in S$ such that, there exists inputs $x\in X'$ and $x\in Y'$,
    \subitem $x|_{I'}=z$ and, for all $i\in [s]\setminus I'$, $x_{i}\in A_{i}$,
    \subitem $y|_{I'}=z$ and, for all $i\in [s]\setminus I'$, $y_{i}\in B\setminus A_{i}$.
\end{itemize}
\end{definition}

\begin{definition}[Density function]
For a square $Q=(I,S,\calA)$, we define its density as 
\[
E(Q) = \log\left(\frac{|S|}{|B|^{|I|}}\right).
\]
\end{definition}

\begin{algorithm}[t]
	\caption{\textsc{Finding a Long Path}}
	\label{algorithm:simulation}
	\begin{algorithmic}[1]
		\STATE Initialize $v\gets$ root of communication tree $\Pi$ and square $Q_0\gets ([s], B^{s},\emptyset)$
		\STATE Set $t\gets 0$
		\WHILE {$R_v$ is not a monochromatic rectangle}
		\STATE Let $Q_t= (I_t,S_t,\calA_t)$ be currently maintained square.
        \STATE Let $v_0,v_1$ be the children of $v$ in $\Pi$.
		\IF{Alice sends a bit at $v$}
		\STATE Let $X_{v_0},X_{v_1}$ be the partition of $X_{v}$ according to Alice's partition.
		\STATE Let $R_{v_0} \gets X_{v_0} \times Y_v$ and $R_{v_1} \gets X_{v_1} \times Y_v$.
		\ENDIF
		\IF{Bob sends a bit at $v$}
		\STATE Let $Y_{v_0},Y_{v_1}$ be the partition of $Y_{v}$ according to Bob's partition.
		\STATE Let $R_{v_0} \gets X_{v} \times Y_{v_0}$ and $R_{v_1} \gets X_{v} \times Y_{v_1}$.
		\ENDIF
		\IF {$E(Q_{t}|_{R_{v_{0}}}) \geq E(Q_{t}|_{R_{v_{1}}})$} \label{line: simulation}
		    \STATE Update $v \gets  v_0$ and $Q'_{t}\gets Q_{t}|_{R_{v_{0}}}$
		\ELSE
		    \STATE Update $v \gets  v_1$ and $Q'_{t}\gets Q_{t}|_{R_{v_{1}}}$
		\ENDIF
	    \STATE Let $\tilde{Q}_{t}$ be a $r$-thick square obtained by applying Lemma \ref{lemma: sparsethick} on $Q'_{t}$
		\IF {the average degree of $\tilde{Q}_{t}$ is smaller than $(2\alpha)$}
		\STATE Let $Q_{t+1}$ be the square by applying Lemma \ref{lemma: sparseprojection} on $\tilde{Q}_{t}$ 
		\label{line: projection}
		\ELSE
		\STATE Let $Q_{t+1}\gets \tilde{Q}_{t}$
		\ENDIF
		\STATE Update $t\gets t+1$
		\ENDWHILE
	\end{algorithmic}
\end{algorithm}

Now we describe how to find a long path in the communication tree. 
Recall that every node in a communication tree has an associated rectangle. 
Starting from the root, we find a path as follows: 
\begin{enumerate}
\item We maintain a square in each intermediate node.
\item For each intermediate node, the path always visits the left or right child whose associated rectangle maximizes the density. 
\end{enumerate}
The pseudo-code is given in Algorithm \ref{algorithm:simulation}.

\begin{proof}[Proof sketch of Theorem \ref{thm: sdisk_dcc}]
\label{proof: sparse}
Let $t^{*}$ be the value of $t$ when we terminate the algorithm. We note this is the length of our path, which is a lower bound of the deterministic communication complexity. 
Now we argue a lower bound of $t^{*}$ by analyzing the changes to the density function in Algorithm \ref{algorithm:simulation}. We consider two types of density changes, which are called simulation and projection respectively.
\begin{itemize}
    \item \textit{Simulation}. In each round $t$, we obtain a square $\tilde{Q}_{t}$ from the square $Q_t$. For every $t$, we have that  $|\tilde{S}_{t}|\geq |S'_{t}|/2$ by Lemma \ref{lemma: sparsethick} and $|S'_{t}|\geq |S_t|/2$ by the choice on Line \ref{line: simulation}. Hence, 
    \[
    E(\tilde{Q}_{t})\geq E(Q_{t})-2.
    \]
    \item \textit{Projection}. The Line \ref{line: projection} is a projection. For every $t$, if $z = |I_{t}| -|I_{t+1}|>0$, then 
    \[
    E(Q_{t+1}) \geq E(\tilde{Q}_t) + z\cdot (\log (|B|/(3\alpha)) - 2) \geq  E(\tilde{Q}_t) +  z\cdot\Omega(\log n).
    \]
\end{itemize}

Note that, to apply Lemma \ref{lemma: sparsethick}, we need control over the average degrees; to apply Lemma \ref{lemma: sparseprojection}, we need control over the thickness.
Indeed we will inductively show that the following properties $P_1(t), P_2(t), P_3(t)$ are true for all $t\geq 0$,
\begin{itemize}
    \item $P_{1}(t)$: the average degree of $Q_{t} =(I_t,S_t,\calA_t)$ is at least $2\cdot\alpha$.
    \item $P_{2}(t)$: the average degree of $Q'_{t}=(I'_t,S'_t,\calA'_t)$ is at least $\alpha$.
    \item $P_{3}(t)$: $\tilde{Q}_t=(\tilde{I}_t,\tilde{S}_t,\tilde{\calA}_t)$ is $r$-thick.
\end{itemize}
The base case $P_{1}(0)$ is true because $S_{0} = B^{s}$ and $|B|\geq 2\cdot\alpha$. The rest part can be proved by applying the thickness lemma and the projection lemma alternatively. We skip the proof here and will include it in our full version.

Finally, we observe that we must have that $I_{t^{*}} =\emptyset$ when the algorithm terminates at step $t^*$. Otherwise, it is not a monochromatic rectangle by Corollary \ref{lemma:monochromatic}.

We note that the total density decrease before the algorithm termination is at most $2\cdot t^*$. 
On the other hand, the total density increase before the termination is at least $s\cdot \Omega(\log n)$. This implies
\[
2\cdot t^{*}\geq s\cdot \Omega(\log n),
\]
and the result follows.
\end{proof}

\subsection{Discussions and open problems}
\label{sec: comparison}

A very interesting follow-up open problem is to study $s$-UEQUAL lower bounds for $s\gtrsim n^{1/2}$. In our proof, the main bottleneck is Lemma \ref{lemma: sparsethick} (thickness lemma), which requires that $s\le\alpha$.
Note that $\alpha\le|B|$ and $s\cdot|B|=n$. Hence, Lemma \ref{lemma: sparsethick} only applies to the range $s\leq n^{1/2}$. 
In fact, the thickness lemma (or similar lemmas) is also the main barrier in query-to-communication lifting theorems. Lifting theorems usually require a full-range lemma (something similar to Lemma \ref{lemma: full-range}) to maintain a full simulation on the communication tree. 
We use the term full simulation to refer to these proofs aiming to construct a decision tree to exactly compute the Boolean functions.

In contrast, we only attempt to find a long path in the communication tree. 
This approach was suggested by Yang and Zhang \cite{Yang2022SimulationMI}. 
In our analysis, only Corollary \ref{lemma:monochromatic} (a direct corollary of the full-range lemma) is needed. 
This is much weaker than the full-range requirement. Recall that the full-range lemma shows that: For every $z\in\{0,1\}^{I}$, there is a pair $x,y\in S$ such that
\[
\forall i\in I,~ z_i=1 ~~~\text{ iff }~~~ x_i=y_i.
\]
For the $s$-UEQUAL problem, we only care about a subset $\{0^{I},e_1,\dots,e_{|I|}\}$ of $\{0,1\}^{I}$.
Here $e_i\in\{0,1\}^{I}$ is the indicate vector. This observation may give a chance to avoid the full-range barrier, providing tight lower bounds for any $s\geq 1$.

Overall, we believe that finding a long-path paradigm may provide more applications beyond the full simulation paradigm.

\normalem
\bibliographystyle{alpha}
\bibliography{reference}

\appendix
\section{Missing proofs in Section \ref{sec: rcc_vs_udisj}} \label{sec:bias lemma}

In this section, we give a proof for the bias lemma (Lemma \ref{Bias_lemma}). We first recall this lemma below.
\begin{lemma}[Lemma \ref{Bias_lemma} restated]
\label{lemma: 6.1}
Let $\delta>0$ be the constant and $J\subseteq [n]$ be the set from Lemma \ref{error_bias}. For any $I\subseteq[n]$ and distinct $i,i'\in I\cap J$, we have that
\[
\gamma_{i'}^{I\setminus\{i\}} \leq \delta.
\]
\end{lemma}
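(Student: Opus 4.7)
The plan is to derive $\gamma_{i'}^{I\setminus\{i\}}\le\delta$ directly from the base case $\gamma_{i'}^{[n]}\le\delta$ supplied by Lemma \ref{error_bias} (which applies since $i'\in J$). Both quantities measure, under a random input $x\sim\calP$ whose leaf $R$ is drawn from $\calR$, the conditional probability that the unique intersection lies at coordinate $i'$; they differ only in how the other coordinates are pinned down—the first fixes the literal value of $x$ everywhere outside $\{i'\}$, whereas the second fixes $x$ only on $I\setminus\{i,i'\}$, fixes the witness $\calW$ on $I^c\cup\{i\}$, and excludes intersections at those coordinates. The key is that this difference is absorbed by a tower-property (convexity) argument once the product structure of $\calP$ is exploited.

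First I would unfold the definition of $\gamma_{i'}^{I\setminus\{i\}}$ and, for each sample $(R,x^{*},\rho)$, expand the inner conditional probability $\gamma_{i',\rho,R}^{I\setminus\{i\}}(x^{*})$ by further conditioning on a realization $y=x|_{I^{c}\cup\{i\}}$ drawn from its $\calP$-conditional distribution. The exclusion $x\notin\bigcup_{\ell\in I^{c}\cup\{i\}}D_{\ell}$ forces $y\in D_{0}^{I^{c}\cup\{i\}}$, so the extended string $x^{**}=x^{*}\cup y$ lies in $D_{0}^{[n]\setminus\{i'\}}$. The inner probability then becomes
\[
\Pr_{x\sim\calP}\bigl[x\in D_{i'}\mid x\in R,\ x|_{[n]\setminus\{i'\}}=x^{**},\ \calW_{I^{c}\cup\{i\}}=\rho\bigr].
\]

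The next step is to drop the spurious $\calW_{I^{c}\cup\{i\}}=\rho$ constraint. Under $\calP$, steps 1--2 of Definition \ref{hard_distribution} make the pairs $(\calW_{\ell},\calA_{\ell})$ mutually independent across $\ell$, and the only cross-coordinate coupling is through the single choice $(\calB,\ell)$ in step~3. Once $x|_{[n]\setminus\{i'\}}\in D_{0}^{[n]\setminus\{i'\}}$ is fixed, $(\calB,\ell)$ is confined to $\{\calB=0\}\cup\{\calB=1,\ \ell=i'\}$, so the conditional law of $x(i')$—a function of $(\calW_{i'},\calA_{i'},\calB,\ell)$ alone—is independent of $\calW_{\ell'}$ for every $\ell'\neq i'$. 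Therefore the displayed probability equals $\gamma_{i',\emptyset,R}^{[n]}(x^{**})$. Averaging over $(R,x^{**})$ and matching the resulting joint law against the $(\calR,\calQ_{[n]\setminus\{i'\}})$-sampling used in $\gamma_{i'}^{[n]}$ yields
\[
\gamma_{i'}^{I\setminus\{i\}}
\;=\;\underset{R}{\mathbb{E}}\underset{x^{**}\mid R}{\mathbb{E}}\bigl[\gamma_{i',\emptyset,R}^{[n]}(x^{**})\bigr]
\;=\;\gamma_{i'}^{[n]}\;\le\;\delta,
\]
which is precisely the convexity inequality promised in the paper.

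The main obstacle is the independence verification in the middle step—checking carefully that appending $\calW_{I^{c}\cup\{i\}}=\rho$ to the conditioning does not perturb the probability of $x\in D_{i'}$, and that the $\calP$-conditional distribution of the auxiliary $y$ composes correctly with the $\calQ$-sampling underlying $\calR$. This reduces to a fibrewise independence argument in the joint distribution $(\calW,\calA,\calB,\ell,x)$ of Definition \ref{hard_distribution}, treating the two cases $\calB=0$ and $\calB=1,\ \ell=i'$ separately. These are routine but delicate book-keeping steps that exploit the explicit product structure of the hard distribution.
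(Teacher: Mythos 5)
Your proposal has a genuine gap, and it sits exactly in the step you flag as ``routine but delicate book-keeping''. The tower-property manipulation requires that when you integrate out the auxiliary $y = x|_{I^c\cup\{i\}}$, the law on $y$ that the inner conditional probability $\gamma_{i',\rho,R}^{I\setminus\{i\}}(x^*)$ induces composes with the outer $(\calQ,\calW\mid\calR=R)$-averaging to reproduce exactly $\Pr_{\calQ}[\cdot\mid\calR=R]$ on $x^{**}$. It does not. The inner probability conditions $x\sim\calP$ on the exclusion $x\notin\bigcup_{\ell\in I^c\cup\{i\}}D_\ell$, which leaves both $\{\calB=0\}$ and $\{\calB=1,\ \ell=i'\}$ alive; the outer averaging is with respect to $\calQ=(\calP\mid\calB=0)$. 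Because $\{x\in R\}$ depends on $x(i')$, and $x(i')$ is influenced by whether $\calB=0$ or $\calB=1,\ell=i'$, the $\calP$-exclusion-conditional law of $y$ is biased relative to the $\calQ$-conditional law whenever the leaf $R$ contains appreciable $D_{i'}$-mass. That bias is precisely the quantity you are trying to bound, so it cannot be waved away as book-keeping: it is the heart of the lemma, and it breaks the claimed \emph{exact} identity $\gamma_{i'}^{I\setminus\{i\}}=\gamma_{i'}^{[n]}$.

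The paper does not claim equality. Its proof introduces the triple $(p,r,s)$ of weights (the $\calQ$-weight, the no-instance $\calP$-weight, and the $D_{i'}$-$\calP$-weight) for each $(x^*,\rho,j^*)$, observes that $\gamma_{i',\rho_{j^*},R}^{I\setminus\{i\}}(x^*)=s/(s+r)$ while $p=r/\sum r$, and then invokes the Yang--Zhang convexity inequality (\cref{convexityinequality}) to pass from $\sum p\cdot\frac{s}{s+r}$ to $\frac{\sum s}{\sum(s+r)}=\Pr[x\in D_{i'}\mid x\in R,\ \ell=i']$. This yields only the one-sided bound $\gamma_{i',R}^{I\setminus\{i\}}\le\Pr[x\in D_{i'}\mid x\in R,\ \ell=i']$, and it is this inequality---not an identity---that the goodness of $R$ (derived from the protocol's error bound) turns into $\gamma_{i'}^{I\setminus\{i\}}\le\delta$. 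Your independence analysis of the product structure of $\calP$ (dropping the $\calW_{I^c\cup\{i\}}$ conditioning once $x|_{[n]\setminus\{i'\}}$ is pinned) is essentially correct, but it handles the inner conditional only; you still need the convexity inequality to control the reweighting, and the proposal asserts the conclusion of that inequality without carrying it out.
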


The proof of Lemma \ref{lemma: 6.1} relies on the following inequality from \cite{Yang2022SimulationMI}.
\begin{lemma}\cite{Yang2022SimulationMI}\label{convexityinequality}
For any $x_1,...,x_n \geq 0$ and $y_1,...,y_n \geq 0$,
\[
\frac{1}{\sum_{j=1}^n y_j}\cdot \sum_{j=1}^n \frac{y_j\cdot x_j}{x_j+y_j} \leq \frac{\sum_{j=1}^n x_j}{\sum_{j=1}^n (x_j+y_j)}.
\]
\end{lemma}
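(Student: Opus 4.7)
The plan is to reduce the inequality to the Cauchy--Schwarz inequality in its Engel (Titu) form. First, I would clear denominators on the left by multiplying both sides through by $Y := \sum_j y_j$ (assuming $Y > 0$; otherwise both sides are trivially $0$ under the natural convention that terms with $x_j + y_j = 0$ contribute $0$). Writing $X := \sum_j x_j$, this reduces the statement to the equivalent form
\[
\sum_{j=1}^n \frac{x_j y_j}{x_j + y_j} \leq \frac{XY}{X+Y},
\]
which is a superadditivity statement for the function $f(x,y) = xy/(x+y)$ (twice the harmonic mean).

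The key algebraic trick is the identity
\[
\frac{x_j y_j}{x_j + y_j} = x_j - \frac{x_j^2}{x_j + y_j},
\]
valid whenever $x_j + y_j > 0$. Summing over $j$ rewrites the left-hand side as $X - \sum_j \frac{x_j^2}{x_j+y_j}$, and the analogous identity on the right rewrites the target as $X - \frac{X^2}{X+Y}$. After cancellation, the inequality is equivalent to
\[
\sum_{j=1}^n \frac{x_j^2}{x_j + y_j} \;\geq\; \frac{X^2}{X+Y} \;=\; \frac{\left(\sum_j x_j\right)^2}{\sum_j (x_j+y_j)}.
\]

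This last inequality is the standard Cauchy--Schwarz / Titu inequality $\sum_j \frac{a_j^2}{b_j} \geq \frac{(\sum_j a_j)^2}{\sum_j b_j}$ (for $a_j \geq 0$, $b_j > 0$), applied with $a_j = x_j$ and $b_j = x_j + y_j$. So the entire proof is: clear denominators, apply the $\frac{xy}{x+y} = x - \frac{x^2}{x+y}$ identity, and invoke Cauchy--Schwarz. The only ``care'' needed is in handling degenerate indices with $x_j = y_j = 0$ (treat the term as $0$) and the degenerate case $Y = 0$ (where all $y_j = 0$, both sides vanish), which I would dispose of in a single sentence at the start. I do not anticipate any real obstacle: once the identity $\frac{xy}{x+y} = x - \frac{x^2}{x+y}$ is written down, the inequality is immediate from Cauchy--Schwarz.
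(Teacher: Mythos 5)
Your proof is correct. One point to flag: the paper itself does not prove this lemma—it imports it by citation, so there is no ``paper's own proof'' to compare against line by line. That said, your route is different from the one the lemma's name suggests the cited source takes. Dividing both sides by $Y = \sum_j y_j$ and setting $t_j = x_j/y_j$ (for indices with $y_j>0$), the left-hand side becomes the weighted average $\sum_j \frac{y_j}{Y}\cdot\frac{t_j}{1+t_j}$, and since $g(t)=\frac{t}{1+t}$ is concave on $[0,\infty)$, Jensen's inequality gives $\sum_j \frac{y_j}{Y}\,g(t_j) \le g\!\left(\sum_j \frac{y_j}{Y}\,t_j\right) = g(X/Y) = \frac{X}{X+Y}$, which is exactly the claim; this is presumably why the label reads ``convexityinequality.'' Your argument instead clears denominators, uses the identity $\frac{x_jy_j}{x_j+y_j} = x_j - \frac{x_j^2}{x_j+y_j}$ to reduce to $\sum_j \frac{x_j^2}{x_j+y_j} \ge \frac{(\sum_j x_j)^2}{\sum_j(x_j+y_j)}$, and closes with Cauchy--Schwarz in Engel form. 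Both are valid and short. The Jensen route makes the superadditivity of the (parallel-sum) function $\frac{xy}{x+y}$ transparent as a concavity statement and avoids any case analysis once $Y>0$ is assumed; your Cauchy--Schwarz route is more self-contained (Engel form is itself a one-line Cauchy--Schwarz) and arguably cleaner algebraically, at the cost of a sentence handling the degenerate $x_j=y_j=0$ and $Y=0$ cases, which you did correctly. Either would be acceptable as a proof of the lemma.
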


\begin{proof}[Proof of Lemma \ref{lemma: 6.1}]
We first recall the random variable $\ell$ in the definition of $\calP$ (Definition \ref{hard_distribution}). Then for any $i\in [n]$, we have that
\[
\Pr_{x\sim\calP}[x\in D_0 \mid \ell = i] =\Pr_{x\sim\calP}[x\in D_i \mid \ell = i]  = 1/2
\]

For an $i\in [n]$, we call a rectangle $R\in \cal{R}$ good for $i$ if 
\[
\Pr_{x\sim\calP}[x\in D_i \mid x\in R, \ell = i] \leq 0.01.
\]

Since the error of deterministic protocol $C$ under $\calP$ is at most $\epsilon$, we have
\[
\Pr_{x\sim \calP}[C(x) \neq \kdisj(x)]  \leq \epsilon.
\]

Let $\calR_0$ be the set of leaf rectangles that protocol $C$ output $0$ and $\calR_1$ be the set of leaf rectangles that protocol $C$ output $1$, we have
\[
\Pr_{x\sim \calP}[C(x) \neq \kdisj(x)] = \Pr_{x\sim \calP}[x\in \calR_0, x\in D_{*}] + \Pr_{x\sim \calP}[x\in \calR_1, x\in D_{0}] \leq \epsilon.
\]

Since $\Pr_{x\sim\calQ}[x\in R] = \Pr_{x\sim \calP}[x\in R|x\in D_0] \leq 2\cdot \Pr_{x\sim \calP}[x\in R,x\in D_0]$, we have,
\[
\sum_{R\in \calR_1} \Pr_{x\sim \calQ}[x\in R] \cdot \Pr_{x\sim \calP}[x\in D_{*}|x\in R] \leq \sum_{R\in \calR_1} 2\cdot \Pr_{x\sim \calP}[x\in R,x\in D_0] = 2\cdot \Pr_{x\sim \calP}[x\in \calR_1,x\in D_0]
\]

Since $\Pr_{x\in \calQ}[x\in R] = \Pr_{x\in \calP}[x\in R|x\in D_0] \leq 2\cdot \Pr_{x\in \calP}[x\in R]$, we have,
\[
\sum_{R\in \calR_0} \Pr_{x\sim \calQ}[x\in R] \cdot \Pr_{x\sim \calP}[x\in D_{*}|x\in R] \leq 2\cdot  \sum_{R\in \calR_0} \Pr_{x\sim \calP}[x\in R,x\in D_{*}] = 2\cdot \Pr_{x\sim \calP}[x\in \calR_0,x\in D_{*}]
\]

Thus,
\begin{align*}
&\sum_{R\in \calR} \Pr_{x\sim \calQ}[x\in R] \cdot \Pr_{x\sim \calP}[x\in D_{*}|x\in R]\\
=& \sum_{R\in \calR_0} \Pr_{x\sim \calQ}[x\in R] \cdot \Pr_{x\sim \calP}[x\in D_{*}|x\in R] + \sum_{R\in \calR_1} \Pr_{x\sim \calQ}[x\in R] \cdot \Pr_{x\sim \calP}[x\in D_{*}|x\in R]\\
\leq & 2\cdot \Pr_{x\sim \calP}[x\in \calR_0,x\in D_{*}] + 2\cdot \Pr_{x\sim \calP}[x\in \calR_1,x\in D_0] \\
\leq & 2\cdot \epsilon
\end{align*}

Since
\[
\sum_{R\in \calR} \Pr_{x\sim \calQ}[x\in R] \cdot \Pr_{x\sim \calP}[\ell = i|x\in R]\Pr_{x\sim \calP}[x\in D_{i}|x\in R,\ell= i] = \sum_{R\in \calR} \Pr_{x\sim \calQ}[x\in R] \cdot \Pr_{x\sim \calP}[x\in D_{*}|x\in R] \leq 2\cdot \epsilon 
\]

By average argument, there is a set of coordinates $J\subseteq[n]$ with $|J|=\Omega(n)$ such that for any $i\in J$,
\[
\underset{R\sim \calR,i\sim \ell}{\Pr}[R \text{ is good for } i] \geq 1-0.01
\]

For any $(x^{*},\rho^{*},j^{*})$,
\[
p(x^{*},\rho^{*},j^{*}) := \Pr_{(x',\rho',j') \sim (\calQ_{I\setminus \{i',i\}},\calW_{I^c},\calW_i\mid R)}[(x',\rho',j') =(x^{*},\rho^{*},j^{*}) ]
\]
\[
r(x^{*},\rho^{*},j^{*}):= \Pr_{(x',\rho',j')\sim (\calP,\calW_{I^c},\calW_i\mid R)}\left[x'\in D_0, x'\mid_{I\setminus \{i,i'\}} =x^{*},\rho'=\rho^{*},j' = j^{*} \mid x'\notin \bigcup_{l\in I^c}D_{l}\right].
\]
\text{ and }
\[
s(x^{*},\rho^{*},j^{*}):= \Pr_{(x',\rho',j')\sim (\calP,\calW_{I^c},\calW_i\mid R)}\left[x'\in D_{i'}, x'\mid_{I\setminus \{i,i'\}} =x^{*},\rho'=\rho^{*},j' = j^{*} \mid x'\notin \bigcup_{l\in I^c}D_{l}\right].
\]
Intuitively, $p(x^{*},\rho^{*},j^{*})$ is the probability of $(x^{*},\rho^{*},j^{*})$ happens in distribution $(\calQ_{I\setminus \{i',i\}},\calW_{I^c},\calW_i\mid R)$,  $r(x^{*},\rho^{*},j^{*})$ is the probability of $x'\in D_{i'}$ and $(x^{*},\rho^{*},j^{*})$ happens in distribution $(\calP,\calW_{I^c},\calW_i\mid R)$, $s(x^{*},\rho^{*},j^{*})$ is the probability of $x'\in D_{0}$ and $(x^{*},\rho^{*},j^{*})$ happens in distribution $(\calP,\calW_{I^c},\calW_i\mid R)$.

We recall connections between $r_j, s_j$ and $p_j$, let $\rho^{*}_{j^{*}} = (\rho,w_i = j^{*})$,
\[
p(x^{*},\rho^{*},j^{*}) = \frac{r(x^{*},\rho^{*},j^{*})}{\sum r(x^{*},\rho^{*},j^{*})}
\]
\[
\gamma_{i',\rho^{*}_{j^{*}},R}^{I\setminus \{i\}}(x^{*}) = \frac{s(x^{*},\rho^{*},j^{*})}{s(x^{*},\rho^{*},j^{*})+r(x^{*},\rho^{*},j^{*})}
\]
and
\[
\Pr_{x\sim\mu}[x\in D_{i'} \mid x\in R, \ell = i] = \frac{\sum s(x^{*},\rho^{*},j^{*})}{\sum (s(x^{*},\rho^{*},j^{*})+r(x^{*},\rho^{*},j^{*}))}.
\]

By Lemma \ref{convexityinequality}, we have
\[
\gamma_{i',R}^{I\setminus \{i\}} = \sum_{(x^{*},\rho^{*},j^{*})}p(x^{*},\rho^{*},j^{*}) \cdot \gamma_{i',\rho^{*}_{j^{*}},R}^{I\setminus \{i\}}(x^{*}) \leq \Pr_{x\sim\mu}[x\in D_{i'} \mid x\in R, \ell = i] \leq 0.01
\]

Since $i'\in J$, ${\Pr}[\calR \text{ is good for } i'] \geq 1-0.01$. Thus,
\[
\gamma_{i'}^{I\setminus\{i\}} = \sum_{R} \Pr[\calR = R] \cdot \gamma_{i',R}^{I\setminus \{i\}} \leq 1\cdot 0.01 + 0.01\cdot (1-0.01) = \delta
\]

We also can prove that for any $i'\in J$, $\gamma_i^{[n]} \leq \delta$ via replacing $I\setminus \{i\}$ with $[n]$ in above proofs. Thus, $J$ also satisfies Lemma \ref{error_bias}.
\end{proof}

\end{document}